\newtheorem{theorem}{Theorem}[section]
\newtheorem{lemma}[theorem]{Lemma}
\newtheorem{remark}[theorem]{Remark}
\title{Bayesian Model Selection with an Application to Cosmology}
\author{Nikoloz Gigiberia}
\date{}
\begin{document}

\maketitle

\begin{abstract}
We investigate cosmological parameter inference and model selection from a Bayesian perspective. Type Ia supernova data from the Dark Energy Survey (DES-SN5YR) are used to test the \(\Lambda\)CDM, \(w\)CDM, and CPL cosmological models. Posterior inference is performed via Hamiltonian Monte Carlo using the No-U-Turn Sampler (NUTS) implemented in NumPyro and analyzed with ArviZ in Python. Bayesian model comparison is conducted through Bayes factors computed using the \texttt{bridgesampling} library in R. The results indicate that all three models demonstrate similar predictive performance, but \(w\)CDM shows stronger evidence relative to \(\Lambda\)CDM and CPL. We conclude that, under the assumptions and data used in this study, \(w\)CDM provides a better description of cosmological expansion.
\end{abstract}

\section{Introduction}
\label{sec:intro}

In modern cosmology, Type Ia supernovae (SNe Ia) have proven to be among the most informative cosmological probes, providing strong evidence for the accelerated expansion of the Universe. By serving as standardisable candles, SNe Ia enable precise measurements of the luminosity distance--redshift relation, which constrains both the matter density and the properties of dark energy. However, despite major observational and theoretical progress, the fundamental origin of cosmic acceleration and the nature of dark energy remain unresolved.  Bayesian statistical methods offer a coherent framework for addressing these questions. In addition to enabling robust parameter estimation, Bayesian inference provides a principled basis for comparing competing cosmological models based on data. 

In this paper, we apply Bayesian inference through Hamiltonian Monte Carlo (HMC) to the DES-SN5YR supernova dataset \cite{DES_SN5YR_data} under three dark energy scenarios: the standard $\Lambda$CDM model, its extension with a constant equation of state parameter (\(w\)CDM), and the Chevallier-Polarski-Linder (CPL) parametrization with time-varying dark energy. Using posterior sampling and bridge sampling for model comparison, we demonstrate how Bayesian tools can be used to constrain cosmological parameters and assess the relative evidence for competing descriptions of dark energy.  

\section{Cosmological Theory}
\label{sec:cosmo_theory}

The expansion of the universe is governed by Einstein's theory of General Relativity, whose dynamics on large scales are described by the Friedmann equations. These equations are formed under the assumption that the universe is homogeneous and isotropic. Under this assumption, we are led to the Friedmann–Lemaître–Robertson–Walker (FLRW) metric [For more context on the assumptions on the models present in this paper, refer to \hyperref[sec:model_assump]{Section 2.1}]. This metric provides a mathematical framework for describing a dynamic spacetime and is given by:
\begin{equation}
ds^2 = -c^2 dt^2 + a(t)^2 \left[ \frac{dr^2}{1 - k r^2} + r^2 \left( d\theta^2 + \sin^2\theta \, d\phi^2 \right) \right],
\end{equation}
where \( ds^2 \) is the spacetime interval, \( c \) is the speed of light, \( a(t) \) is the scale factor describing how distances evolve with cosmic time \( t \), \( k \) is the spatial curvature constant (with values $0$, $+1$, or $-1$ for flat, closed, or open geometries respectively), and \( (r, \theta, \phi) \) are co-moving spatial coordinates. The FLRW metric serves as the backbone of the standard model of cosmology, allowing the Friedmann equations to relate the expansion rate to the energy content of the universe.

The Friedmann equations describe the evolution of the expansion rate, quantified by the Hubble parameter \(H(z)\), as a function of redshift \(z\). Redshift is defined observationally through the stretching of photon wavelengths,
\[
1 + z = \frac{\lambda_{\mathrm{obs}}}{\lambda_{\mathrm{em}}}.
\]
In an FLRW universe, where distances scale with the scale factor \(a(t)\), one can show that photon wavelengths stretch proportionally to \(a(t)\). Under this assumption, the observational definition of redshift becomes
\[
1 + z = \frac{a_0}{a(t_{\mathrm{em}})},
\]
where \(a_0\) is the present-day scale factor and \(a(t_{\mathrm{em}})\) is the value at the time the light was emitted. A higher redshift corresponds to a more distant and earlier epoch in the universe’s history. In the context of Type Ia supernovae, the observed redshift allows us to reconstruct the expansion history of the universe by comparing the predicted and observed luminosity distances.

Dark energy is a hypothetical form of energy with negative pressure, responsible for the observed acceleration of the universe's expansion. It is characterized through its equation-of-state parameter \( w \), defined as:
\begin{equation}
w = -\frac{p}{\rho},
\end{equation}
where \( p \) is the pressure and \( \rho \) is the energy density of the dark energy component. The value of \( w \) determines how dark energy evolves with time. If \( w = -1 \), this implies a constant energy density over time, while deviations from this value allow for dynamical models of dark energy such as quintessence or more general parametric forms like the CPL parametrization \cite{Ryden2016}.

\subsection{Model Assumptions}
\label{sec:model_assump}

In this paper, we adopt several assumptions supported by results in modern cosmology to simplify the inference of cosmological parameters. First, we assume a spatially flat Universe, i.e., \(\Omega_k = 0\), consistent with the observations from the Planck 2018 release, which report \(\Omega_k = 0.0007 \pm 0.0019\) \cite{Planck2018}, as well as other contemporary cosmological analyses including the DES collaboration analysis \cite{DES2024}. This assumption allows us to simplify our cosmological models without significantly impacting the results. We further assume that the Universe is homogeneous and isotropic on large scales, which is well supported by observations of the Cosmic Microwave Background (CMB) and large-scale structure surveys. This leads naturally to the use of the Friedmann-Lemaître-Robertson-Walker (FLRW) metric as the spacetime background \cite{Ryden2016}.

We also assume that for the low-redshift supernova data (\(z \lesssim 1.5\)), the contribution of radiation to the energy budget becomes negligible compared to matter and dark energy. The radiation density parameter \(\Omega_r\) scales as \((1+z)^4\), and at \(z = 0\), its contribution is on the order of \(10^{-5}\), as inferred from CMB temperature and effective neutrino species. We therefore neglect \(\Omega_r\) in the models in this paper.

Finally, we treat the Hubble constant \(H_0\) as a free parameter rather than fixing it to values from past studies such as those of Planck or SH0ES, given the significant tension between local and early-Universe measurements \cite{Riess2022}. 

\subsection{Cosmological Models}
\label{sec:cosmo_models}

Among all three cosmological models, the Hubble parameter, which will be at the core of the model evaluation process in this paper, has the general formula:
\begin{equation}
H(z)^2 \;=\; H_0^2\; \left( \Omega_m (1+z)^3 +\; \Omega_\mathrm{\Lambda} \exp\!\left[ 3 \int_0^z \frac{1+w(z')}{1+z'} \, dz' \right] \right)\; ,
\end{equation}
\[
\text{with }\qquad \Omega_\mathrm{\Lambda}=1-\Omega_m.
\]
given the assumptions in the previous section \cite{Ryden2016}. Given the relationship between \(\Omega_{\Lambda}\) and \(\Omega_m\), henceforth we will only be using \(\Omega_m\) both in inference and formulae.

\subsection{\texorpdfstring{ \( \Lambda CDM\) Model}{LambdaCDM Model}}

The $\Lambda$CDM (Lambda Cold Dark Matter) model is the standard model of cosmology. It assumes a spatially flat universe dominated by cold dark matter (CDM) and a cosmological constant $\Lambda$, which serves as a form of dark energy with constant energy density and equation of state $w = -1$. In this model, dark energy does not evolve with time \cite{Ryden2016}. Under the $\Lambda$CDM framework, the Hubble parameter $H(z)$, which quantifies the expansion rate at redshift $z$, is given by:
\begin{equation}
H(z)^2 = H_0^2 \left[ \Omega_m (1+z)^3 + 1-\Omega_m \right].
\end{equation}

\subsection{\texorpdfstring{ \( wCDM \) Model}{wCDM Model}}

The $w$CDM model generalizes $\Lambda$CDM by replacing the cosmological constant with a dark energy component whose equation of state $w$ is allowed to differ from $-1$, but is constant in time. The Hubble parameter  in this model, after solving the relevant integral, becomes:

\begin{equation}
H(z)^2 = H_0^2 \left[ \Omega_m (1+z)^3 + (1 - \Omega_m)(1+z)^{3(1+w)} \right],
\end{equation}

Here, $w$ is treated as a free parameter to be inferred from data. If $w = -1$, the model reduces to $\Lambda$CDM. Values of $w < -1$ correspond to ``phantom``  dark energy, while $w > -1$ imply a less rapidly accelerating universe \cite{An2016}.

\subsection{\texorpdfstring{ \( CPL \) Model}{CPL Model}}

The Chevallier-Polarski-Linder (CPL) model \cite{Chevallier2001, Linder2003} further generalizes dark energy by allowing its equation of state to evolve with redshift. This model introduces two parameters, \( w_0 \) and \( w_a \), and defines the equation of state as:
\[
w(z) = w_0 + w_a \frac{z}{1 + z}
\]
This parametrization captures a wide range of possible dark energy behaviors with a minimal number of parameters, reducing to \( w = w_0 \) at present (\( z = 0 \)) and allowing for linear evolution at higher redshifts.

The corresponding Hubble parameter is:
\begin{equation}
H(z)^2 = H_0^2 \left[ \Omega_m (1+z)^3 + (1 - \Omega_m)(1+z)^{3(1+w_0 + w_a)} \exp\left(-\frac{3 w_a z}{1+z} \right) \right]
\end{equation}
See the appendix for the full derivation.

\subsection{The Distance Modulus}
\label{sec:dist_mod}

Observationally, the DES dataset, and Type Ia supernova surveys generally, report measurements in terms of distance modulus \(\mu(z)\), allowing theoretical predictions of \(\mu(z)\) as a function of \(z\) to be compared directly with the data. The distance modulus can be calculated theoretically through the luminosity distance \(d_L(z)\), which encodes the expansion history of the universe. For a flat universe, the luminosity distance is given by  
\[
d_L(z) = (1+z) \, c \int_0^z \frac{dz'}{H(z')}
\]  
where \(c\) is the speed of light, and \(H(z)\) is the Hubble parameter at redshift \(z\) \cite{Ryden2016}. Once \(d_L(z)\) is computed, the distance modulus is  
\begin{equation}
\mu(z) = 5 \log_{10} \left( \frac{d_L(z)}{1\, \text{Mpc}} \right) + 25
\end{equation}  
with \(d_L(z)\) expressed in megaparsecs (Mpc). This formula allows us to utilise redshift data to make theoretical predictions of \(\mu\), allowing for Bayesian inference by comparing observed supernova distance moduli to model predictions as a function of cosmological parameters.

\section{Data}
\label{sec:data}

In this paper, we use the 5-Year Supernova sample from the Dark Energy Survey (DES-SN5YR) \cite{DES_SN5YR_data}, which includes light curves and associated data products for 1829 spectroscopically and photometrically classified Type Ia supernovae. The light curves are obtained using the Scene Modeling Photometry (SMP) pipeline, as detailed in \cite{Brout2019} and \cite{Sanchez2024}. 

The DES-SN5YR supernova dataset provides observed distance moduli  \(\mu_{obs}\),  their associated measurement uncertainties \(\sigma_{\mu}\) and a covariance matrix containing correlated systematic uncertainties. While the statistical errors are specific to each supernova and treated as independent, the systematic uncertainties arise from shared sources, such as calibration errors, which induce correlations across multiple supernovae. These correlations are encoded in the aforementioned covariance matrix provided by the DES collaboration \cite{DES2024, Vincenzi2024}.

To construct the total uncertainty matrix \(\Sigma\) we begin by building a diagonal matrix from the squared statistical uncertainties \(\sigma_{\mu}^2\), assuming that these errors are uncorrelated and independent between supernovae. We then add the systematic covariance matrix to this diagonal statistical matrix. This yields the full covariance matrix, \(\Sigma\ = \Sigma_{stat} + \Sigma_{sys}\), which accounts for both uncorrelated and correlated uncertainties. This matrix serves as the noise model in a multivariate Gaussian likelihood of the form \(\mu_{obs} \sim \mathcal{N}(\mu_{theory}, \Sigma)\) \cite{DES_SN5YR_data}, where \(\mu_{theory}\) is the predicted distance modulus computed from a given cosmological model. 

The figure below presents the Hubble diagram for the DES-SN5YR dataset. The Hubble diagram shows the observed distance modulus of Type Ia supernovae as a function of redshift. Each point corresponds to a supernova measurement, with the vertical axis indicating relative luminosity distance and the horizontal axis its cosmological redshift. The three theoretical models which will be analysed in this paper predict a specific curve in this diagram given the data. The aim of the analysis is to fit such a curve to the data, allowing one to estimate cosmological parameters and assess which models of dark energy best describe the observed relation through the Bayesian paradigm.  

\begin{figure}[H]
    \centering
    \includegraphics[width=1\linewidth]{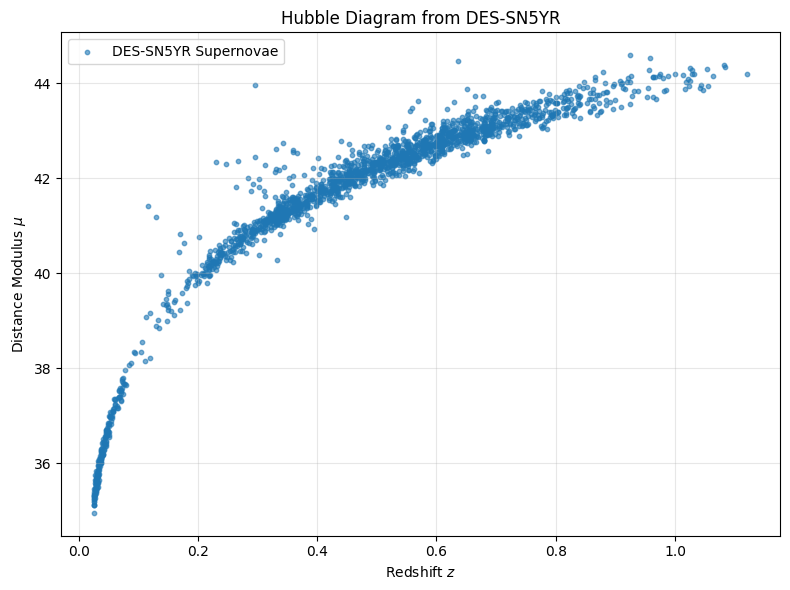}
    \caption{The Hubble Diagram for the DES-SN5YR dataset.}
    \label{fig:Hubble-Diagram}
\end{figure}

\section{Models and Methodology}
\label{sec:models_methods}

\subsection{Bayesian Setup}
\label{sec:Bayesian}

Bayesian inference provides a framework for parameter estimation and model selection/comparison which combines prior beliefs and data. Suppose we have a model \(M\) with parameters \(\theta\), and some observed data \(D\). Using Bayes' theorem we can find that:
\begin{equation}
    f(\theta|D, M) = \frac{f(D|M, \theta)f(\theta|M)}{f(D|M)}
\end{equation}
where \( f(\theta|D, M)\) is the posterior distribution, \(f(\theta|M)\) is the prior distribution, \(f(D|M, \theta)\) is the likelihood, and \(f(D|M)\) is the marginal likelihood or evidence, which is obtained by integrating:

\begin{equation}
    f(D|M) = \int f(D|M, \theta)f(\theta|M) d\theta.
\end{equation}
This integral averages the likelihood function $f(D |M, \theta)$ across parameter values $\theta$, weighted by their prior probabilities $f(\theta | M)$.

To perform Bayesian model selection, we then use the above reasoning to update our prior beliefs and perform an inference on the cosmological parameters within the \(\Lambda CDM\), \(w CDM\), and \(CPL\) models. However, in most real world cases such as inferring cosmological parameters, the evidence and posterior are difficult to compute analytically. We thus opt to use  methods to approximate the posterior. To do this we make use of Markov Chain Monte Carlo (MCMC) methods in order to approximate the posterior distribution of each parameter by constructing a Markov chain whose equilibrium distribution is the target posterior. This is possible because the Markov chain is constructed with transition dynamics that leave the posterior distribution invariant. In particular, algorithms such as Metropolis–Hastings enforce detailed balance with respect to the posterior, ensuring that it is the stationary distribution of the chain. Under irreducibility and aperiodicity, the chain is ergodic and converges to this equilibrium, so that its samples asymptotically approximate the posterior. Traditional MCMC techniques such as the Metropolis-Hastings algorithm often suffer from inefficiencies in high-dimensional or correlated parameter spaces \cite{Robert2007}, making them suboptimal for cosmological models where degeneracies exist, such as that between \(\Omega_m\) and \(H_0\) and the dark energy equation of state parameters in the CPL model, \(w_0\) and \(w_a\). In order to reduce the chance of encountering such potential issues, we opt to use Hamiltonian Monte Carlo (HMC).

\subsection{Hamiltonian Monte Carlo}
\label{sec:HMC}

Hamiltonian Monte Carlo borrows the Hamiltonian dynamic concept from physics to sample from  a distribution. Hamiltonian dynamics describes the evolution of  a physical system over time using positions and momenta governed by Hamilton’s equations. In the context of probabilistic modeling, the HMC method introduces a probabilistic system analogous to a physical one \cite{neal2011mcmc, betancourt2017conceptual}. 
 
The method defines the system using two vectors; the model parameters we wish to sample \( \left( \boldsymbol\theta = (\theta_1, \theta_2, \ldots, \theta_d) \right) \)  act as the position vector, and a vector of auxiliary variables \( \left( \boldsymbol\phi = (\phi_1, \phi_2, \ldots, \phi_d) \right) \), acting as the momentum vector. 

Each point in the system's orbit is associated with a potential energy,  \(U(\boldsymbol{\theta})\), and a kinetic energy, \(K(\boldsymbol{\phi})\). The Hamiltonian, denoted by \begin{equation}
H(\boldsymbol{\theta}, \boldsymbol{\phi}) = U(\boldsymbol{\theta}) + K(\boldsymbol{\phi})
\end{equation}
The temporal evolution of the system is then governed by Hamilton's equations:
\begin{align}
\frac{\partial \theta_a}{\partial t} &=  \frac{\partial H}{\partial \phi_a} =\frac{\partial K(\boldsymbol{\phi})}{\partial \phi_a} \\
\frac{\partial \phi_a}{\partial t} &=  -\frac{\partial H}{\partial \theta_a} = -\frac{\partial U(\boldsymbol{\theta})}{\partial \theta_a}, \quad a =1, \ldots, d
\end{align}
The equations (2) and (3) describe how position and momentum change of a pseudo time variable \(t\), and allow us to simulate the trajectory of a particle through the parameter space.

In order to numerically solve (2) and (3), and thus apply it to HMC, we use the leapfrog integrator \cite{hoffman2014nuts}.  It approximates the trajectory by alternating updates of momentum and position by the following algorithm:

\begin{algorithm}[H]
\caption{Leapfrog Integrator}
\KwIn{Initial position $\boldsymbol{\theta}$, momentum $\boldsymbol{\phi}$, step size $\epsilon$, number of steps $L$}
\KwOut{Updated $(\boldsymbol{\theta}, \boldsymbol{\phi})$}

$\boldsymbol{\phi} \gets \boldsymbol{\phi} - \frac{\epsilon}{2} \nabla_{\boldsymbol{\theta}} U(\boldsymbol{\theta})$ 

\For{$i \gets 1$ \KwTo $L$}{
    $\boldsymbol{\theta} \gets \boldsymbol{\theta} + \epsilon \nabla_{\boldsymbol{\phi}} K(\boldsymbol{\phi})$
    \If{$i \ne L$}{
        $\boldsymbol{\phi} \gets \boldsymbol{\phi} - \epsilon \nabla_{\boldsymbol{\theta}} U(\boldsymbol{\theta})$ 
    }
}

$\boldsymbol{\phi} \gets \boldsymbol{\phi} - \frac{\epsilon}{2} \nabla_{\boldsymbol{\theta}} U(\boldsymbol{\theta})$ 

\KwRet{$(\boldsymbol{\theta}, \boldsymbol{\phi})$}
\end{algorithm}

Repeating these steps for \(L\) iterations simulates the evolution of the system for a total "simulation time" of \(L\epsilon\). 

In statistical mechanics, the canonical distribution describes the probability of a system being in a particular state with energy \(E(x)\):
\begin{equation}
    \psi(x) \propto e^{-E(x)} 
\end{equation}
In the HMC analogy, we associate this idea with the joint distribution over \((\boldsymbol{\theta}, \boldsymbol{\phi})\):
\begin{equation}
    \psi(\boldsymbol{\theta}, \boldsymbol{\phi}) \propto e^{-H(\boldsymbol{\theta}, \boldsymbol{\phi})}  = e^{-U(\boldsymbol{\theta})}e^{-K( \boldsymbol{\phi})}
\end{equation}
This factorization implies that the momentum and parameter distributions are independent. In this case, \(U(\boldsymbol{\theta}) = - \log f(\boldsymbol{\theta} | D)\) which is the log-posterior,  and \(K(\boldsymbol{\phi}) = \frac{1}{2}  \boldsymbol{\phi}^T \boldsymbol{C}^{-1}\boldsymbol{\phi}\) given that \(\boldsymbol{\phi} \sim \mathcal{N}(0, \boldsymbol{C}) \).

Thus, the Hamiltonian equations become:
\begin{align}
    \frac{\partial \theta_a}{\partial t} &=  [\boldsymbol{C}^{-1}\boldsymbol{\phi}]_a \\
\frac{\partial \phi_a}{\partial t} &=  \frac{\partial \log f(\boldsymbol{\theta} | D)}{\partial \theta_a} , \quad a =1, \ldots, d
\end{align}

Now we are ready to move on to the HMC algorithm itself. Rather than sampling from a proposal distribution, HMC simulates dynamics using the leapfrog integrator to propose future states that are likely to be accepted.  Hamiltonian Monte Carlo produces proposals by simulating approximate Hamiltonian dynamics. These dynamics are volume-preserving (symplectic) and reversible, which means that when combined with a Metropolis correction step, the Markov chain has the desired posterior as its invariant distribution \cite{neal2011mcmc, betancourt2017conceptual}. 

The HMC algorithm proceeds as follows:

\begin{algorithm}[H]
\caption{Hamiltonian Monte Carlo (HMC)}
\KwIn{Target density $\pi(\boldsymbol{\theta}) \propto \exp(-U(\boldsymbol{\theta}))$;\\
\hspace{1.5em} Mass matrix $\mathbf{C}$; Step size $\epsilon$; Leapfrog steps $L$; Number of samples $N$}
\KwOut{Samples $\{\boldsymbol{\theta}_1, \dots, \boldsymbol{\theta}_N\}$}

Initialize $\boldsymbol{\theta}_0$\;

\For{$n = 1$ \KwTo $N$}{
    Draw $\boldsymbol{\phi}_n \sim \mathcal{N}(\mathbf{0}, \mathbf{C})$\;
    Set $(\boldsymbol{\theta}, \boldsymbol{\phi}) \gets (\boldsymbol{\theta}_{n-1}, \boldsymbol{\phi}_n)$\;

    $\boldsymbol{\phi} \gets \boldsymbol{\phi} - \dfrac{\epsilon}{2} \nabla_{\boldsymbol{\theta}} U(\boldsymbol{\theta})$\;
    \For{$i = 1$ \KwTo $L$}{
        $\boldsymbol{\theta} \gets \boldsymbol{\theta} + \epsilon\, \mathbf{C}^{-1} \boldsymbol{\phi}$\;
        \If{$i < L$}{
            $\boldsymbol{\phi} \gets \boldsymbol{\phi} - \epsilon\, \nabla_{\boldsymbol{\theta}} U(\boldsymbol{\theta})$\;
        }
    }
    $\boldsymbol{\phi} \gets \boldsymbol{\phi} - \dfrac{\epsilon}{2} \nabla_{\boldsymbol{\theta}} U(\boldsymbol{\theta})$\;

    $\boldsymbol{\phi} \gets -\boldsymbol{\phi}$\;

    $H_{\text{start}} \gets U(\boldsymbol{\theta}_{n-1}) + \dfrac{1}{2} \boldsymbol{\phi}_n^\top \mathbf{C}^{-1} \boldsymbol{\phi}_n$\;
    $H_{\text{proposal}} \gets U(\boldsymbol{\theta}) + \dfrac{1}{2} \boldsymbol{\phi}^\top \mathbf{C}^{-1} \boldsymbol{\phi}$\;

    $\rho \gets \min\left(1, \exp\left(H_{\text{start}} - H_{\text{proposal}}\right)\right)$\;

    Draw $u \sim \text{Uniform}(0,1)$\;

    \eIf{$u < \rho$}{
        Accept: $\boldsymbol{\theta}_n \gets \boldsymbol{\theta}$\;
    }{
        Reject: $\boldsymbol{\theta}_n \gets \boldsymbol{\theta}_{n-1}$\;
    }
}
\end{algorithm}

\subsection{The NUTS Sampler}
\label{sec:NUTS}

While HMC is efficient in exploring complex posterior landscapes, it requires careful tuning of the step size $\epsilon$ and the number of steps \(L\), and improper choices can lead to poor mixing or high rejection rates.

The No-U-Turn Sampler (NUTS) \cite{hoffman2014nuts} eliminates the need to set the trajectory length $L$ by adaptively determining how long to simulate the Hamiltonian dynamics. The core idea is to stop the simulation once the trajectory begins to retrace its path, the so-called U-turn.

NUTS constructs a balanced binary tree of candidate points by simulating leapfrog steps both forward and backward in time. At each doubling of the tree, it checks a no U-turn condition that terminates expansion if further movement would backtrack:
\[
(\boldsymbol{\theta}^+ - \boldsymbol{\theta}^-)^\top \boldsymbol{\phi}^- < 0 \quad \text{or} \quad (\boldsymbol{\theta}^+ - \boldsymbol{\theta}^-)^\top \boldsymbol{\phi}^+ < 0,
\]
where $\boldsymbol{\theta}^-$ and $\boldsymbol{\theta}^+$ are the leftmost and rightmost positions on the current trajectory, and $\boldsymbol{\phi}^-$, $\boldsymbol{\phi}^+$ are their corresponding momenta.

Once a valid subtree is built, NUTS samples a point from it using a slice sampling-inspired procedure to preserve detailed balance.  For this paper, we use the \texttt{numpyro} module in Python to sample with NUTS. In the figure below, we see a demonstration of an MCMC following a Metropolis-Hastings algorithm and another following HMC with the NUTS sampler.
\begin{figure}[H]
    \centering
    \includegraphics[width=1\linewidth]{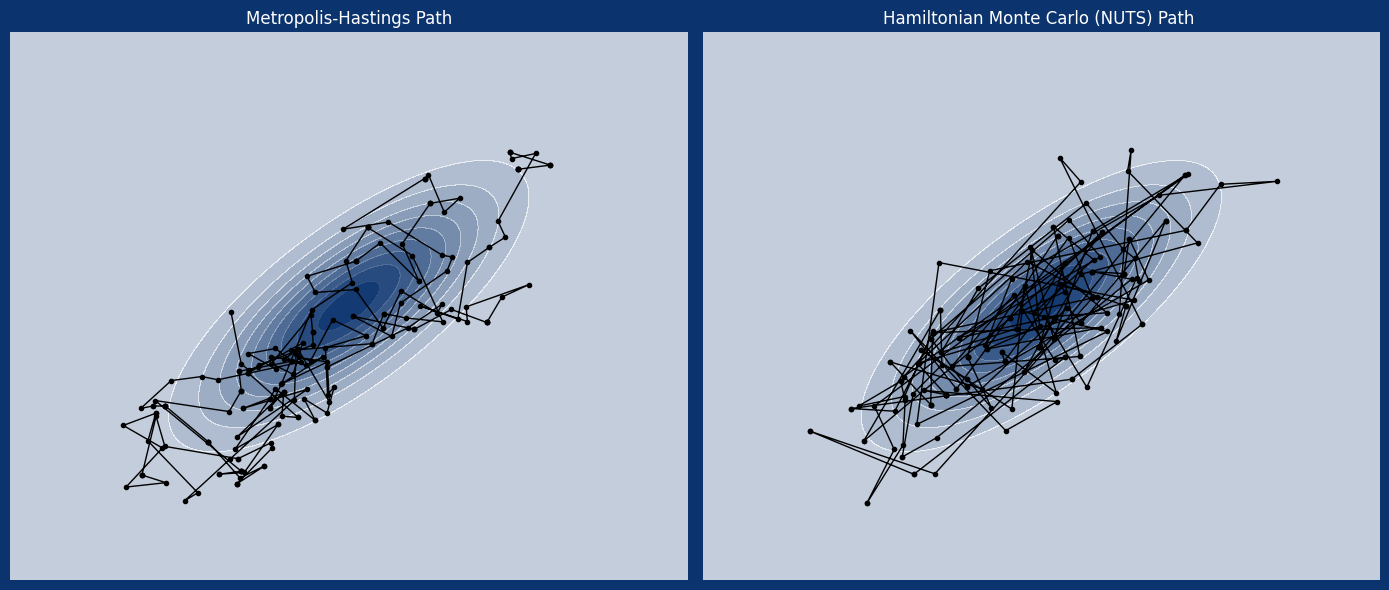}
    \caption{The path taken by a Metropolis-Hastings and Hamiltonian Monte Carlo MCMC on a bivariate Gaussian distribution with 200 samples drawn.}
    \label{fig:enter-label}
\end{figure}

\subsection{Prior Choice}
\label{sec:Prior}

In specifying prior distributions for the cosmological parameters in our models (\(\Lambda CDM\), \(wCDM\), \(CPL\)), we aimed to balance physical plausibility and constraints from previous observations. The chosen priors are weakly informative: they incorporate current knowledge about cosmological parameters while remaining sufficiently flexible to allow the data to speak for itself.

For the Hubble constant \(H_0\), we place a log-normal prior with mean 70 and standard deviation of scale 0.5 in log space. This choice ensures positivity, which is a physical necessity, and reflects the range of contemporary measurements such as those from the SH0ES collaboration \cite{Riess2022}, Planck 2018 \cite{Planck2018}, and the assumptions of the DES collaboration \cite{DES_SN5YR_data} when computing errors.

For the matter density parameter \(\Omega_m\), we adopt a Beta distribution with shape parameters 3 and 7, which places most of its mass in the physically plausible interval (0, 1), peaking around 0.3. This aligns well with current measurements, while being sufficiently uninformative to allow deviations. We prefer the Beta family here because it is naturally bounded on (0, 1), unlike the Normal distribution, and supports flexible shapes depending on the parameterization. 

In the \(wCDM\) model, the dark energy equation-of-state parameter \(w\) is assigned a normal distribution with mean \(-1\) and standard deviation \(2\). This centers the distribution on the \(\Lambda CDM\) value of  \(w= -1\), with ample room for variation in both directions. The width of \(2\) is consistent with typical prior ranges used in cosmological parameter estimation \cite{Sullivan2011} so that we do not artificially constrain the possibility of phantom energy or quintessence behavior. 

Similarly, in the CPL model, the time-varying equation of state is parameterized by \(w_0\) distributed by a normal distribution with mean \(-1\) and standard deviation \(2\), and \(w_a\) a normal distribution with mean \(0\) and standard deviation \(4\), consistent with results of previous studies using the CPL framework \cite{Zhao2017}, but wide enough to allow the data to have a significant influence.

 \subsection{Cholesky Whitening Transformation}
 \label{sec:Cholesky}

In the data provided by DES, the covariance matrix \(\Sigma\), exhibits a very high condition number, on the order of \(10^8.\) Inverting such an ill-conditioned matrix in likelihood evaluations can lead to severe numerical instability due to round-off error amplification and loss of precision. Whitening, in this case Cholesky whitening, which will be discussed shortly, replaces the unstable handling of \(\Sigma^{-1}\) with computationally stable triangular solves. Moreover, the log-determinant, \(\log |{\Sigma}|\) becomes numerically robust even for ill-conditioned matrices. From a geometric perspective, whitening “spheres” the likelihood surface into an isotropic domain, mitigating extreme anisotropy and improving conditioning for gradient-based sampling. While this does not automatically guarantee dramatic gains in effective sample size, it improves numerical stability of the posterior geometry and makes step-size adaptation in HMC more robust. Additionally, whitening allows us to decompose the joint likelihood into pointwise contributions, which leads to a more accurate Bayesian analysis and offers interpretable diagnostics via standardized residuals. 

\subsubsection{Whitening Process} 
In the data, the observational uncertainties are not independent, but correlated through the full covariance matrix, \(
\Sigma = \Sigma_{\mathrm{stat}} + \Sigma_{\mathrm{sys}} \in \mathbb{R}^{n \times n},
\) as defined before where \(n\) is the number of data points. To work with these correlations efficiently, we employ a whitening transformation based on the Cholesky factorization. Since $\Sigma$ is symmetric and positive definite, it admits a unique lower-triangular Cholesky factorization:
\[
\Sigma = L L^\top, \qquad L \in \mathbb{R}^{n \times n}.
\]
Given the residuals between data and model,
\[
r = \mu_{\mathrm{obs}} - \mu_{\mathrm{model}} \in \mathbb{R}^{n} ,
\]
we define the whitened residuals as
\[
y = L^{-1} r.
\]
This linear transformation has the effect of removing the correlations encoded in $\Sigma$. Intuitively, the transformation maps the original correlated error space into a new coordinate system in which the components are uncorrelated and have unit variance.  Conversely, to reconstruct correlated residuals from independent fluctuations $y \sim \mathcal{N}(0, I_n)$, we apply the inverse transformation:
\[
r = L y.
\]
\subsubsection{Justification}
Let $\mu_{\text{obs}}\in\mathbb{R}^{n}$ denote the observed data vector and $\theta\in\Theta\subset\mathbb{R}^d$ parameters with prior $f(\theta)$, where \(\Theta\) denotes the parameter space. Assume the likelihood
\[
\mu_{\text{obs}}\mid \theta \sim \mathcal{N}(\mu_{\text{model}}(\theta),\,\Sigma),
\]
where $\Sigma$ is a fixed, symmetric positive definite matrix independent of $\theta$ and of the model index $M$. Define the whitening map $T:=L^{-1}$ and the whitened residuals
\[
y(\theta):=T\big(\mu_{\text{obs}}-\mu_{\text{model}}(\theta)\big)\in\mathbb{R}^{n} .
\]
Denote by $\phi(\cdot)$ the univariate standard normal density and by $\Phi_n(\cdot;0,I)$ the standard $n$-variate normal density.

\begin{lemma}[Exact likelihood reparameterisation]
Let $L$ be the (lower-triangular) Cholesky factor of $\Sigma$, i.e.\ $\Sigma=LL^\top$, and define $T:=L^{-1}$. For
\[
r(\theta):=\mu_{obs}-\mu_{model}(\theta)
\quad\text{and}\quad
y(\theta):=T\,r(\theta)=L^{-1}\!\big(\mu_{obs}-\mu_{model}(\theta)\big),
\]
the likelihood satisfies
\[
f(\mu_{obs}\mid\theta)\;=\;|\det T|\;\Phi_n\!\big(y(\theta);\,0, I\big)
\;=\;|\det T|\;\prod_{i=1}^n \phi\!\big(y_i(\theta)\big),
\]
where $\Phi_n(\,\cdot\,;0,I)$ is the $n$-variate standard normal density and $\phi$ is the univariate standard normal density.
\end{lemma}

\begin{proof}
We begin the proof with the MVN density of the likelihood,
\begin{equation}\label{eq:mvn}
f(\mu_{obs}\mid\theta)
= \frac{1}{(2\pi)^{n/2}\,|\Sigma|^{1/2}}
\exp\!\left(-\tfrac12\, r(\theta)^\top \Sigma^{-1} r(\theta)\right),
\quad
r(\theta)=\mu_{obs}-\mu_{mdel}(\theta).
\end{equation}
Because $\Sigma$ is symmetric positive definite there exists a unique lower-triangular $L$ with positive diagonal such that $\Sigma=LL^\top$. Hence $\Sigma^{-1}=L^{-\top}L^{-1}$ and, writing $T:=L^{-1}$, we have
\begin{equation}\label{eq:quad}
r(\theta)^\top \Sigma^{-1} r(\theta)
= r(\theta)^\top L^{-\top}L^{-1} r(\theta)
= \big(L^{-1} r(\theta)\big)^\top \big(L^{-1} r(\theta)\big)
= y(\theta)^\top y(\theta),
\end{equation}
where $y(\theta):=T r(\theta)=L^{-1}r(\theta)$. Next, for the normalising constant note that
\begin{equation}\label{eq:det}
|\Sigma|=|LL^\top|=|L|^2
\quad\Longrightarrow\quad
|\Sigma|^{1/2}=|L|
\quad\Longrightarrow\quad
\frac{1}{|\Sigma|^{1/2}}=\frac{1}{|L|}=|\det L|^{-1}=|\det T|.
\end{equation}
Substituting \eqref{eq:quad} and \eqref{eq:det} into \eqref{eq:mvn} yields
\[
f(\mu_{obs}\mid\theta)
= \frac{1}{(2\pi)^{n/2}}\,|\det T|\,
\exp\!\left(-\tfrac12\, y(\theta)^\top y(\theta)\right).
\]
Recognising the $n$-variate standard normal density,
\[
\Phi_n\!\big(y(\theta);0,I\big)
= \frac{1}{(2\pi)^{n/2}}\exp\!\left(-\tfrac12\, y(\theta)^\top y(\theta)\right),
\]
we obtain
\[
f(\mu_{obs}\mid\theta)=|\det T|\;\Phi_n\!\big(y(\theta);0,I\big).
\]
Finally, since under $I$ the coordinates are independent, $\Phi_n(y;0,I)=\prod_{i=1}^n \phi(y_i)$, which gives
\[
f(\mu_{obs}\mid\theta)=|\det T|\;\prod_{i=1}^n \phi\!\big(y_i(\theta)\big).
\]
\end{proof}

\begin{lemma}[Posterior invariance under whitening]
The posterior density satisfies
\[
f(\theta\mid\mu_{\text{obs}}) \;\propto\; f(\theta)\,f(\mu_{\text{obs}}\mid\theta)
\;\propto\; f(\theta)\prod_{i=1}^n \phi\big(y_i(\theta)\big),
\]
hence the normalised posterior is identical whether expressed with the MVN likelihood or the whitened iid likelihood.
\end{lemma}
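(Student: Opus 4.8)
The plan is to derive the posterior invariance directly from the previous lemma together with Bayes' theorem, treating it as an essentially one-line consequence. First I would invoke Bayes' theorem in the form $f(\theta\mid\mu_{\text{obs}}) = f(\theta)\,f(\mu_{\text{obs}}\mid\theta)/f(\mu_{\text{obs}})$, and observe that since $f(\mu_{\text{obs}})$ does not depend on $\theta$, we have the proportionality $f(\theta\mid\mu_{\text{obs}}) \propto f(\theta)\,f(\mu_{\text{obs}}\mid\theta)$. Then I would substitute the result of the \emph{Exact likelihood reparameterisation} lemma, namely $f(\mu_{\text{obs}}\mid\theta) = |\det T|\prod_{i=1}^n \phi(y_i(\theta))$, and note that $|\det T|$ is a constant independent of $\theta$ (because $T = L^{-1}$ is fixed, $\Sigma$ being independent of $\theta$), so it can be absorbed into the proportionality constant, yielding $f(\theta\mid\mu_{\text{obs}}) \propto f(\theta)\prod_{i=1}^n \phi(y_i(\theta))$.

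The second half of the statement — that the \emph{normalised} posterior is identical in both representations — I would handle by a normalisation argument: two densities that are proportional on the same space and both integrate to $1$ must be equal. Concretely, I would write the normalising constant for the whitened form as $Z = \int_\Theta f(\theta)\prod_{i=1}^n \phi(y_i(\theta))\,d\theta$ and for the MVN form as $Z' = \int_\Theta f(\theta)\,f(\mu_{\text{obs}}\mid\theta)\,d\theta = f(\mu_{\text{obs}})$, and use the lemma to conclude $Z' = |\det T|\,Z$; the constant $|\det T|$ cancels between numerator and denominator when one forms the ratio $f(\theta)\,f(\mu_{\text{obs}}\mid\theta)/Z'$, so it equals $f(\theta)\prod_i \phi(y_i(\theta))/Z$. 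This shows the two normalised posteriors coincide pointwise.

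Honestly, there is no real obstacle here — the content is entirely carried by the preceding lemma, and the only thing to be careful about is making explicit that $|\det T| \neq 0$ (so it is a legitimate nonzero constant to absorb and cancel) and that it is genuinely $\theta$-independent, which both follow from $\Sigma$ being fixed, symmetric, and positive definite with a unique Cholesky factor $L$ of positive diagonal. I would also briefly remark on why this matters in context: it justifies running the sampler on the numerically stable whitened parameterisation without changing the target posterior, and it legitimises the pointwise decomposition used later for diagnostics and for the bridge-sampling evidence computation. One could optionally add a sentence noting that the same cancellation shows the \emph{evidence} $f(\mu_{\text{obs}})$ transforms by the known Jacobian factor $|\det T|$, so Bayes factors between models — which share the same $\Sigma$ and hence the same $|\det T|$ — are themselves invariant under the whitening.
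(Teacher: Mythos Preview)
Your proposal is correct and follows essentially the same approach as the paper: invoke the previous lemma to write $f(\mu_{\text{obs}}\mid\theta)=|\det T|\prod_i \phi(y_i(\theta))$, observe that $|\det T|$ is independent of $\theta$, and conclude that this constant cancels upon posterior normalisation. The paper's own proof is the two-sentence version of exactly this argument; your additional remarks on the normalising constants, the nonvanishing of $|\det T|$, and the consequence for Bayes factors are accurate elaborations but go beyond what the paper records.
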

\begin{proof}
By the previous lemma $f(\mu_{\text{obs}}\mid\theta)=|\det T|\prod_i \phi(y_i(\theta))$, with $|\det T|$ independent of $\theta$. This multiplicative constant cancels in posterior normalisation.
\end{proof}

\begin{theorem}[MH invariance under whitening]
Let $\tilde\pi(\theta)$ be the unnormalised posterior density based on the original MVN likelihood and prior $f(\theta)$, and let $\tilde\pi_w(\theta)$ be the unnormalised posterior based on the whitened independent, identically distributed likelihood and the same prior. Then there exists a constant $c>0$ independent of $\theta$ such that $\tilde\pi(\theta)=c\,\tilde\pi_w(\theta)$ for all $\theta$.
Consequently, for any proposal kernel $q(\theta'\mid\theta)$, the Metropolis--Hastings acceptance probability is identical when targeting $\tilde\pi$ or $\tilde\pi_w$, and the induced MH Markov chain transition kernels coincide.
\end{theorem}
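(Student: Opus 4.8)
The plan is to reduce the whole statement to the exact likelihood identity already proved in the preceding lemma. First I would write the two unnormalised posteriors explicitly: $\tilde\pi(\theta) = f(\theta)\,f(\mu_{\text{obs}}\mid\theta)$, where $f(\mu_{\text{obs}}\mid\theta)$ is the multivariate normal density with mean $\mu_{\text{model}}(\theta)$ and covariance $\Sigma$, and $\tilde\pi_w(\theta) = f(\theta)\prod_{i=1}^n \phi\big(y_i(\theta)\big)$, where the likelihood factor is the whitened iid Gaussian density in the whitened residuals $y_i(\theta)$. The exact-likelihood reparameterisation lemma gives $f(\mu_{\text{obs}}\mid\theta) = |\det T|\prod_{i=1}^n \phi\big(y_i(\theta)\big)$, so substituting and cancelling the common factor $f(\theta)\prod_i \phi(y_i(\theta))$ yields $\tilde\pi(\theta) = |\det T|\,\tilde\pi_w(\theta)$ for every $\theta \in \Theta$. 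Hence the claimed constant is $c = |\det T|$. That $c > 0$ follows because $\Sigma$ is symmetric positive definite, so its Cholesky factor $L$ is lower triangular with strictly positive diagonal, giving $\det L > 0$ and $\det T = (\det L)^{-1} > 0$; and that $c$ is independent of $\theta$ follows because $T = L^{-1}$ is built solely from the fixed matrix $\Sigma$, which by assumption does not depend on $\theta$ or on the model index.

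For the consequence, I would recall that the Metropolis--Hastings acceptance probability for target $\tilde\pi$ and proposal $q$ is
\[
\alpha_{\tilde\pi}(\theta,\theta') \;=\; \min\!\left(1,\ \frac{\tilde\pi(\theta')\,q(\theta\mid\theta')}{\tilde\pi(\theta)\,q(\theta'\mid\theta)}\right)
\]
on the set where the denominator is positive, with the usual convention elsewhere. Substituting $\tilde\pi = c\,\tilde\pi_w$ into numerator and denominator, the positive constant $c$ cancels, so $\alpha_{\tilde\pi} \equiv \alpha_{\tilde\pi_w}$ pointwise. Since the proposal $q$ is by hypothesis the same in both cases, the full MH transition kernel
\[
P(\theta,\mathrm{d}\theta') \;=\; q(\theta'\mid\theta)\,\alpha(\theta,\theta')\,\mathrm{d}\theta' \;+\; \delta_\theta(\mathrm{d}\theta')\left(1 - \int q(\theta''\mid\theta)\,\alpha(\theta,\theta'')\,\mathrm{d}\theta''\right)
\]
is the same whether we target $\tilde\pi$ or $\tilde\pi_w$, because both the move density and the holding mass are functions only of $q$ and $\alpha$, which coincide.

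I do not anticipate a real obstacle: the mathematical content is a single cancellation once the exact-likelihood lemma is available. The only matters requiring care are bookkeeping rather than depth, namely (i) stating explicitly that $|\det T|$ does not depend on $\theta$, which is precisely where the hypothesis of a fixed, model-independent $\Sigma$ enters and is the reason the argument would break for a $\theta$-dependent covariance; (ii) treating the degenerate set $\{\tilde\pi(\theta) = 0\}$, which coincides with $\{\tilde\pi_w(\theta) = 0\}$ since the two differ by a positive factor, consistently under whatever convention for $\alpha$ is adopted; and (iii) observing that the conclusion is insensitive to normalisation of $q$ and holds equally at the level of densities or of transition kernels. I would end by noting that this sharpens the earlier posterior-invariance lemma: not merely the stationary law but the entire sampler dynamics, and therefore all mixing and convergence behaviour, are left unchanged by the Cholesky whitening reparameterisation.
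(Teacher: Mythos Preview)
Your proposal is correct and follows essentially the same route as the paper: invoke the exact-likelihood reparameterisation lemma to get $\tilde\pi(\theta)=|\det T|\,\tilde\pi_w(\theta)$ with $c=|\det T|$ constant in $\theta$, then observe that this constant cancels in the Metropolis--Hastings acceptance ratio so the transition kernels coincide. Your version is in fact slightly more careful than the paper's, since you spell out why $c>0$, write the full transition kernel including the holding mass, and address the degenerate set where the target vanishes.
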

\begin{proof}
By the MVN density and change of variables,
\[
f(\mu_{\text{obs}}\mid\theta)
=\frac{1}{(2\pi)^{n/2}|\Sigma|^{1/2}}
\exp\!\Big(-\tfrac12(\mu_{\text{obs}}-\mu_{\text{model}}(\theta))^\top \Sigma^{-1}(\mu_{\text{obs}}-\mu_{\text{model}}(\theta))\Big)
=|\det T|\prod_{i=1}^n \phi\big(y_i(\theta)\big),
\]
with $|\det T|=|\det L|^{-1}$ independent of $\theta$.
Thus
\[
\tilde\pi(\theta)=f(\theta)\,f(\mu_{\text{obs}}\mid\theta)
= \big(|\det T|\big) \; f(\theta)\prod_{i=1}^n \phi\big(y_i(\theta)\big)
=: c\,\tilde\pi_w(\theta),
\]
with $c=|\det T|>0$ constant in $\theta$.
For MH with proposal $q$, the acceptance probability is
\[
\alpha(\theta,\theta')
= \min\Big\{1,\;\frac{\tilde\pi(\theta')\,q(\theta\mid\theta')}{\tilde\pi(\theta)\,q(\theta'\mid\theta)}\Big\}.
\]
Replacing $\tilde\pi$ by $c\,\tilde\pi$ multiplies the numerator and denominator by the same constant $c$, leaving the ratio unchanged. Therefore $\alpha$ is identical, and so are the transition kernels.
\end{proof}
\begin{theorem}[HMC flow invariance under whitening]
Let the posterior for parameters $\theta\in\mathbb{R}^d$ be proportional to
\[
f(\theta)\,f(\mu_{\mathrm{obs}}\mid\theta),
\]
where the data likelihood is multivariate normal $\mu_{\mathrm{obs}}\mid\theta\sim\mathcal{N}(\mu_{\mathrm{model}}(\theta),\Sigma)$ with $\Sigma\in\mathbb{R}^{n\times n}$ symmetric positive definite and independent of $\theta$. Let $L$ be the Cholesky factor of $\Sigma$ and set $T:=L^{-1}$. Define the potentials
\[
U(\theta):=-\log f(\theta)-\log f(\mu_{\mathrm{obs}}\mid\theta) + C,
\qquad
U_w(\theta):=-\log f(\theta)-\sum_{i=1}^n\log\phi\big(y_i(\theta)\big) + D,
\]
where $y(\theta)=T(\mu_{\mathrm{obs}}-\mu_{\mathrm{model}}(\theta))$ and $\phi$ is the univariate standard normal density .
Then $\log|\det T|$ is constant in $\theta$ and
\[
U(\theta)=U_w(\theta)-\log|\det T|.
\]
Consequently, $\nabla_\theta U(\theta)=\nabla_\theta U_w(\theta)$, Hamiltonian trajectories (in exact arithmetic) coincide, Metropolis acceptance probabilities are identical, and the ideal HMC/NUTS transition kernels coincide.
\end{theorem}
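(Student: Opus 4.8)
The plan is to reduce every assertion to the already-proved Lemma on exact likelihood reparameterisation, since all of the dynamical claims flow from a single scalar identity. First I would note that $\Sigma$ is fixed and independent of $\theta$ (and of the model index), so its Cholesky factor $L$ and hence $T=L^{-1}$ are constant matrices; therefore $\log|\det T| = -\log|\det L| = -\tfrac12\log|\det\Sigma|$ is a number, not a function of $\theta$, which settles the first claim.

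Next I would take negative logarithms in the identity $f(\mu_{\mathrm{obs}}\mid\theta)=|\det T|\prod_{i=1}^n\phi(y_i(\theta))$ supplied by the reparameterisation Lemma, obtaining $-\log f(\mu_{\mathrm{obs}}\mid\theta) = -\log|\det T| - \sum_{i=1}^n\log\phi(y_i(\theta))$. Substituting this into the definition of $U$ and comparing with $U_w$ — taking the two free additive constants equal, $C=D$, which is harmless since an overall constant changes neither gradients, nor Hamiltonian flow, nor acceptance ratios — gives $U(\theta)=U_w(\theta)-\log|\det T|$. Differentiating in $\theta$ and using that $\log|\det T|$ is constant yields $\nabla_\theta U(\theta)=\nabla_\theta U_w(\theta)$ at once.

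For the dynamical conclusions I would observe that Hamilton's equations — and indeed each individual step of the leapfrog integrator — reference the potential only through $\nabla_\theta U$, while the position half-steps use only $\mathbf{C}^{-1}\boldsymbol\phi$, which is shared by both formulations. Hence the Hamiltonian vector field is literally identical in the two parameterisations, so a trajectory launched from a given $(\boldsymbol\theta,\boldsymbol\phi)$ is the same in continuous time and, in exact arithmetic, leapfrog-step for leapfrog-step. For the Metropolis correction, write $H=U+K$ and $H_w=U_w+K$; since $H-H_w\equiv-\log|\det T|$ is the same constant at every phase-space point, the energy difference $H_{\mathrm{start}}-H_{\mathrm{proposal}}$ is unchanged between the two targets, so $\min(1,\exp(H_{\mathrm{start}}-H_{\mathrm{proposal}}))$ — and therefore the full HMC transition kernel, once the momentum refreshment $\boldsymbol\phi\sim\mathcal N(\mathbf 0,\mathbf C)$ is coupled — coincides.

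The place that needs genuine care is the NUTS claim, which I would handle by coupling the auxiliary randomness: the momentum draw, the tree-doubling directions, and the slice level. The no-U-turn stopping rule $(\boldsymbol\theta^+-\boldsymbol\theta^-)^\top\boldsymbol\phi^\pm<0$ depends only on the simulated positions and momenta, which we have just shown agree, so both runs build the same binary tree. The slice variable, drawn uniformly on $(0,\exp(-H(\boldsymbol\theta_0,\boldsymbol\phi_0)))$, differs between the two targets only by the multiplicative constant $|\det T|$; coupling the two draws through $u_w=u/|\det T|$ makes the set of states admitted from each subtree identical, hence the sampled proposal and the ideal NUTS kernel are the same. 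I expect this coupling bookkeeping to be the main (and essentially the only) obstacle; everything else is direct substitution into the reparameterisation Lemma.
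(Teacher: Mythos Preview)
Your proposal is correct and follows essentially the same route as the paper: invoke the reparameterisation lemma to obtain $-\log f(\mu_{\mathrm{obs}}\mid\theta)=-\log|\det T|-\sum_i\log\phi(y_i(\theta))$, substitute into the potentials to get $U=U_w-\log|\det T|$ (up to the choice $C=D$, which you rightly flag and the paper silently absorbs), conclude $\nabla U=\nabla U_w$, and then observe that leapfrog steps and Metropolis energy differences depend only on gradients and Hamiltonian increments, in which the constant cancels. Your treatment of NUTS via an explicit coupling of the slice variable and tree-doubling directions is more detailed than the paper's, which simply asserts that the ideal NUTS kernel coincides once trajectories and acceptance probabilities are shown to agree; both are valid, yours is just more careful.
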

\begin{proof}
By the multivariate normal density and the whitening change of variables,
\[
\log f(\mu_{\mathrm{obs}}\mid\theta)=\log|\det T|+\sum_{i=1}^n\log\phi\big(y_i(\theta)\big).
\]
Substituting into $U(\theta)$ gives
\[
U(\theta)=-\log f(\theta)-\log|\det T|-\sum_{i=1}^n\log\phi\big(y_i(\theta)\big)
=U_w(\theta)-\log|\det T|.
\]
Since $|\det T|$ does not depend on $\theta$ by assumption, it follows that
$\nabla_\theta U(\theta)=\nabla_\theta U_w(\theta)$. Hamiltonian dynamics (leapfrog updates) depend only on these gradients and on the kinetic energy $K(p)$, which is independent of the likelihood transformation; hence the continuous-time trajectories coincide. The Metropolis accept/reject step uses differences of the Hamiltonian $H=U+K$, in which the constant $-\log|\det T|$ cancels, giving identical acceptance probabilities. Therefore the exact (ideal) HMC/NUTS transition kernels coincide.
\end{proof}

The theorem above can be generalised to a diffeomorphism between two manifolds.  From the same, we know that the leapfrog integrator is symplectic and reversible for separable Hamiltonians. For more details see \cite{José_Saletan_1998, neal2011mcmc, betancourt2017conceptual}.

\begin{theorem}[Equivariance of Leapfrog under Affine Transformations]
Let $T:\mathbb{R}^d \to \mathbb{R}^d$ be an affine map of the form
\[
T(\xi) = J\xi + b,
\]
where $J$ is an invertible constant matrix and $b$ a translation vector.  
Suppose Hamiltonian dynamics are simulated with leapfrog using potential energy $U(\theta)$ and kinetic energy
\[
K_\theta(p_\theta) = \tfrac12\, p_\theta^\top M^{-1} p_\theta.
\]
Define new coordinates $\xi = J^{-1}(\theta-b)$ with corresponding momenta $p_\xi = J^\top p_\theta$.  
Then the leapfrog integrator is \emph{equivariant} under $T$: a leapfrog step of size $\varepsilon$ in $(\xi,p_\xi)$-coordinates corresponds exactly to a leapfrog step of the same size in $(\theta,p_\theta)$-coordinates.
\end{theorem}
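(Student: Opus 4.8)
The plan is to first pin down precisely what a leapfrog step ``in $(\xi,p_\xi)$-coordinates'' means, and then to verify, one elementary sub-step at a time, that it intertwines with the affine map. The change of variables $(\theta,p_\theta)\mapsto(\xi,p_\xi)=\big(J^{-1}(\theta-b),\,J^\top p_\theta\big)$ is the cotangent lift of the position map $\theta\mapsto J^{-1}(\theta-b)$, and a one-line computation with the symplectic form $\sum_a d\theta_a\wedge dp_{\theta,a}$ confirms it is canonical. Pushing the Hamiltonian $H=U(\theta)+\tfrac12 p_\theta^\top M^{-1}p_\theta$ forward through this map produces, in the new chart, $\tilde H(\xi,p_\xi)=\tilde U(\xi)+\tfrac12 p_\xi^\top\tilde M^{-1}p_\xi$ with $\tilde U(\xi):=U(J\xi+b)$ and $\tilde M:=J^\top M J$, equivalently $\tilde M^{-1}=J^{-1}M^{-1}J^{-\top}$. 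The leapfrog ``in the $\xi$-chart'' is understood to be the integrator built from $\tilde U$ and $\tilde M$; I would state this at the outset, since it is exactly the hypothesis on which the theorem rests. The sole analytic ingredient is then the chain rule $\nabla_\xi\tilde U(\xi)=J^\top\nabla_\theta U(\theta)\big|_{\theta=J\xi+b}$, which is immediate because $\tilde U=U\circ T$ with $T$ affine.

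Next I would run the step-by-step verification. Suppose $(\theta,p_\theta)$ and $(\xi,p_\xi)$ are related as above. A half-kick in the $\xi$-chart sends $p_\xi\mapsto p_\xi-\tfrac{\varepsilon}{2}\nabla_\xi\tilde U(\xi)=J^\top p_\theta-\tfrac{\varepsilon}{2}J^\top\nabla_\theta U(\theta)=J^\top\big(p_\theta-\tfrac{\varepsilon}{2}\nabla_\theta U(\theta)\big)$, so the updated momenta remain related by $J^\top$ while the positions are untouched. A drift in the $\xi$-chart sends $\xi\mapsto\xi+\varepsilon\tilde M^{-1}p_\xi=\xi+\varepsilon J^{-1}M^{-1}J^{-\top}\big(J^\top p_\theta\big)=J^{-1}(\theta-b)+\varepsilon J^{-1}M^{-1}p_\theta=J^{-1}\big((\theta+\varepsilon M^{-1}p_\theta)-b\big)$, i.e.\ exactly the position map $\theta'\mapsto J^{-1}(\theta'-b)$ applied to the drifted $\theta'=\theta+\varepsilon M^{-1}p_\theta$, with momenta unchanged so that $p_\xi=J^\top p_\theta$ still holds. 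Since the identity $\theta'=J\xi'+b$ is maintained after each sub-step, the chain rule above applies again at the closing half-kick. Hence each of the three elementary maps — half-kick, drift, half-kick — preserves the correspondence, and composing them in the order prescribed by the integrator (and repeating over the $L$ inner steps) shows that one full leapfrog step in the $\xi$-chart is the $T$-conjugate, in the cotangent-lifted sense, of one full leapfrog step in the $\theta$-chart. That is the asserted equivariance.

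I expect the main obstacle to be conceptual rather than computational. One must commit, from the start, to leapfrog in the new coordinates being run with the transformed mass matrix $\tilde M=J^\top M J$ and transformed potential $\tilde U=U\circ T$ — with the original $M$ retained the statement is false — after which the entire argument collapses to the chain rule together with the cancellation $\tilde M^{-1}J^\top=J^{-1}M^{-1}$. A minor point I would flag: since leapfrog is a composition of the two elementary shear maps of phase space, and each intertwines with the linear part $J$ of $T$ while the translation $b$ only ever appears inside $\xi=J^{-1}(\theta-b)$ and is never disturbed by a momentum kick, the identity is exact at finite step size $\varepsilon$ and needs no appeal to exact integrability; the same bookkeeping, with $J$ replaced by the Jacobian of a general diffeomorphism, is what underlies the remark following the previous theorem.
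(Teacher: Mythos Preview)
Your proposal is correct and follows essentially the same route as the paper: identify the transformed kinetic energy $K_\xi(p_\xi)=\tfrac12 p_\xi^\top(J^{-1}M^{-1}J^{-\top})p_\xi$ and the transformed potential $\tilde U=U\circ T$, then argue that each leapfrog sub-step intertwines with the affine map. The paper's proof is a two-sentence sketch (``under a linear reparametrisation, these updates transform linearly with no additional dependence on $\xi$''), whereas you carry out the half-kick/drift/half-kick verification explicitly and flag the crucial hypothesis that leapfrog in the $\xi$-chart must use $\tilde M=J^\top M J$; this is more rigorous but not a different argument.
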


\begin{proof}
Since $J$ is constant, the momentum transformation $p_\theta = J^{-T} p_\xi$ is independent of position, and the induced kinetic energy in $\xi$-coordinates,
\[
K_\xi(p_\xi) = \tfrac12\, p_\xi^\top (J^{-1} M^{-1} J^{-T}) p_\xi,
\]
is quadratic with constant metric. The leapfrog scheme consists of position updates using $\nabla K$ and momentum updates using $\nabla U$. Under a linear reparametrisation, these updates transform linearly with no additional dependence on $\xi$. Therefore applying leapfrog in $\xi$ and then mapping by $T$ yields exactly the same update as mapping by $T$ first and applying leapfrog in $\theta$.  
\end{proof}

In fact, the above theorem can be generalised to show that this is an if and only if case, however this is out of scope of this paper.

In the figures below, we display the results of running four independent HMC chains with 9000 samples after warm-up for all three cosmological models.

\begin{figure}[H]
    \centering
    \includegraphics[width=1\linewidth]{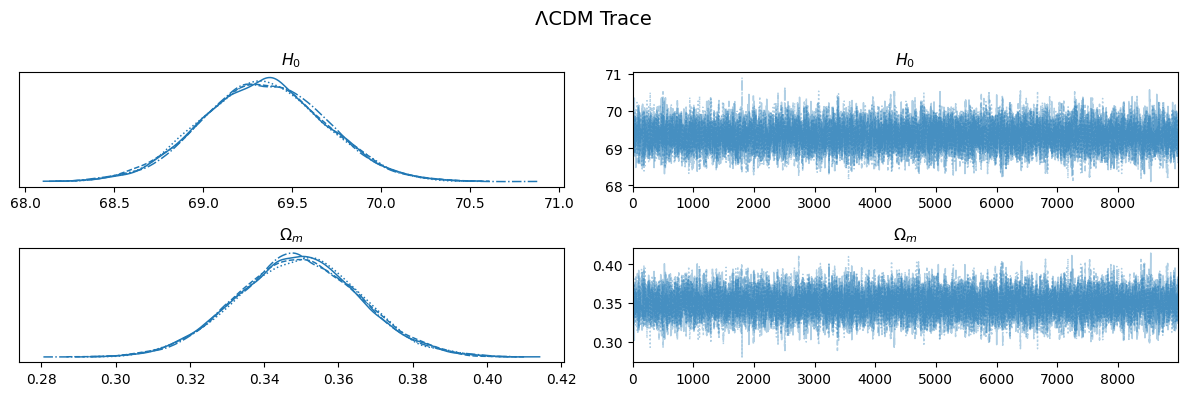}
    \caption{Trace plots for posterior samples of \(H_0\) and \(\Omega_m\)under the \( \Lambda CDM\) model. The two set of curves on the left show the posteriors estimated by four independent HMC chains. Each chain shows stable mixing and no visible divergences in the trace plots on the right, indicating good convergence of the NUTS sampler in NumPyro.}
    \label{fig:lcdm_trace}
\end{figure}
\begin{figure}[H]
    \centering
    \includegraphics[width=1\linewidth]{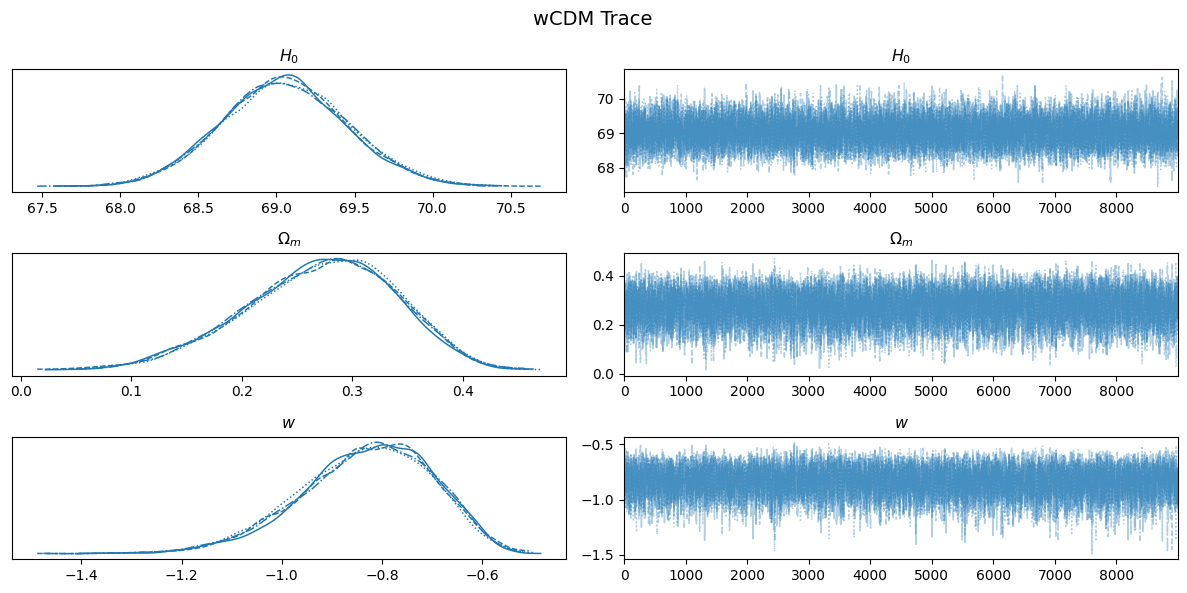}
    \caption{Same as Fig. 3, but for the \(wCDM\) model. The chains again show good convergence, with slight skewing for the \(w\) parameter's trace. }
    \label{fig:wcdm_trace}
\end{figure}
\begin{figure}[H]
    \centering
    \includegraphics[width=1\linewidth]{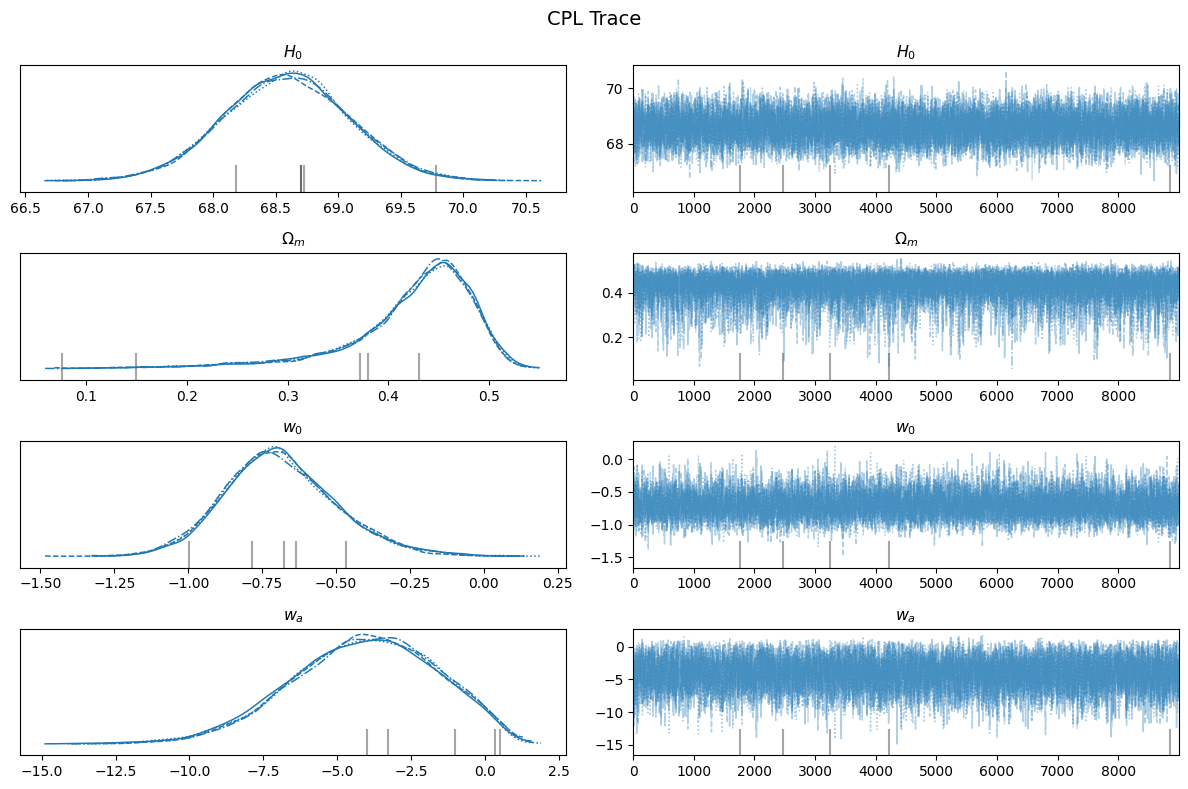}
    \caption{Same as Fig. 3, but for the \(CPL\) model. The chains again show good convergence, with skewness for the \(\Omega_m\) parameter's trace and similar skewing for the \(w_a\) parameter as in Fig. 4 for \(w\).}
    \label{fig:CPL_trace}
\end{figure}

\subsection{Model Selection and Evaluation Methods}
\label{sec:Model_Select}

\subsubsection{Bayes Factors}
Bayes factors provide a framework to quantify relative evidence that data provide for competing hypotheses within the Bayesian paradigm. For models $M_1$ and $M_2$ and data \(D\), the Bayes factor is defined as the ratio of marginal likelihoods:
\begin{equation}
BF_{12} = \frac{f(D | M_1)}{f(D | M_2)}
\end{equation}
where each marginal likelihood represents the evidence for model $k$, as was discussed previously. Through Bayes' theorem, Bayes factors directly relate to posterior model probabilities. If models possess equal prior probabilities $\mathbb{P}(M_1) = \mathbb{P}(M_2)$, then:
\begin{equation}
\frac{\mathbb{P}(M_1 | D)}{\mathbb{P}(M_2 | D)} = BF_{12}
\end{equation}
More generally, the posterior odds equal the product of prior odds and the Bayes factor:
\begin{equation}
\frac{\mathbb{P}(M_1 | y)}{\mathbb{P}(M_2 | y)} = BF_{12} \cdot \frac{\mathbb{P}(M_1)}{\mathbb{P}(M_2)}
\end{equation}
Computing marginal likelihoods presents significant computational challenges; as such, we use the bridge sampling approximation to mend this. The method introduces auxiliary densities that bridge between prior and posterior, providing more stable estimates through the relationship:
\begin{equation}
\frac{f(D | M_1)}{f(D | M_2)} = \frac{\mathbb{E}_{\text{post}_2}[f(D | \theta_1, M_1) f(\theta_1 | M_1) / q(\theta_1)]}{\mathbb{E}_{\text{post}_1}[f(D | \theta_2, M_2) f(\theta_2 | M_2) / q(\theta_2)]}
\end{equation}
where $q(\theta)$ represents a bridging density, and the expectations are taken under the posterior distributions of the respective models with \(\theta_i\)-s representing the respective inferred parameters. For this paper, we used the \texttt{bridgesampling} library in R to conduct bridge sampling to estimate the Bayes factors \cite{Gronau2017}.

\subsubsection{Widely Applicable Information Criterion}
The Widely Applicable Information Criterion (WAIC) \cite{Watanabe2010}, represents a fully Bayesian extension of the Akaike Information Criterion (AIC). Unlike AIC, which conditions on point estimates, WAIC properly accounts for parameter uncertainty by averaging over the entire posterior distribution. Unlike classical information criteria, WAIC accounts for model complexity and is particularly well-suited for complex models such as those in this paper. It is known that theoretically, the leave-one-out cross-validation in Bayes estimation is asymptotically equal to the widely applicable information criterion and that the sum of the cross-validation error \cite{Watanabe2010};  therefore, we use WAIC as a sufficient tool to measure how well the three models predict observed values of \(\mu\). We compute WAIC using the \texttt{arviz} module in Python, which estimates the expected log pointwise predictive density (elpd) and its variance directly from posterior samples.

\begin{lemma}[WAIC factorisation after whitening]
Let  $\ell_i(\theta):=\log\phi(y_i(\theta))$ be the pointwise log-likelihood contributions after whitening by \(T\). Then
\[
\log f(\mu_{\text{obs}}\mid\theta)=\log|\det T|+\sum_{i=1}^n \ell_i(\theta),
\]
where $\log|\det T|$ is constant in $\theta$.
\end{lemma}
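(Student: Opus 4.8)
The plan is to recognise that this lemma is nothing more than a restatement of the likelihood identity already established in the ``Exact likelihood reparameterisation'' lemma, read off at the level of logarithms, so the proof is essentially a one-line dereference of an earlier result together with the bookkeeping of the constant term. First I would invoke that earlier lemma, which gives $f(\mu_{\text{obs}}\mid\theta)=|\det T|\prod_{i=1}^n\phi(y_i(\theta))$. Taking logarithms of both sides and using $\log(ab)=\log a+\log b$ turns the product into a sum, yielding $\log f(\mu_{\text{obs}}\mid\theta)=\log|\det T|+\sum_{i=1}^n\log\phi(y_i(\theta))$, which is exactly the claimed display once we substitute the definition $\ell_i(\theta):=\log\phi(y_i(\theta))$.

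Second, I would address the ``constant in $\theta$'' clause. Here the key observation is that $T=L^{-1}$ where $L$ is the Cholesky factor of $\Sigma$, and $\Sigma$ is assumed fixed and independent of $\theta$ (and of the model index); hence $L$, and therefore $\det T=\det L^{-1}=(\det L)^{-1}=\prod_i L_{ii}^{-1}$, carries no $\theta$-dependence. So $\log|\det T|$ is a genuine constant. This is the same reasoning used in the HMC flow invariance theorem, so I could simply cite that if a fully self-contained argument is not wanted.

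I do not expect any real obstacle: the only thing to be careful about is making explicit that $\phi$ denotes the univariate standard normal density and that the product-to-sum step requires each $\phi(y_i(\theta))>0$ (true, since the Gaussian density is strictly positive), so that the logarithm of the product genuinely splits termwise. A secondary point worth a sentence is that the whitened residuals $y_i(\theta)$ depend on $\theta$ only through $\mu_{\text{model}}(\theta)$ via the fixed linear map $T$, so all $\theta$-dependence of the log-likelihood is carried by the $\ell_i$ terms and none of it leaks into $\log|\det T|$ — which is precisely what makes the WAIC pointwise decomposition legitimate.

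In short, the proof is: (i) quote the exact likelihood reparameterisation lemma; (ii) take logs and convert the product to a sum; (iii) note $\Sigma$, hence $L$, hence $\det T$, is independent of $\theta$, so the prefactor is constant; (iv) identify the summands with $\ell_i(\theta)$. The ``hard part,'' such as it is, is purely expository — ensuring the reader sees that the constancy of $\log|\det T|$ is what licenses treating the $\ell_i$ as the full pointwise log-likelihood contributions for the subsequent WAIC computation.
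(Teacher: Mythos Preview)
Your proposal is correct and matches the paper's approach exactly: the paper's proof is a single sentence stating that the identity follows immediately from the reparameterised form of the likelihood, which is precisely your step (i)--(iv) compressed. If anything, your version is more thorough, since you spell out the positivity of $\phi$, the product-to-sum step, and the $\theta$-independence of $\det T$, all of which the paper leaves implicit.
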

\begin{proof}
The identity for the log-likelihood follows immediately from the reparameterised form:
\[
\log f(\mu_{\text{obs}}\mid\theta) = \log|\det T| + \sum_{i=1}^n \ell_i(\theta).
\]

\end{proof}
\begin{remark}
The whitened likelihood admits an exact factorisation into conditionally independent terms. Consequently, WAIC computed from $\{\ell_i(\theta)\}_{i=1}^n$ are valid for the factorisation induced by $T$. WAIC uses the pointwise contributions $\ell_i(\theta)$ via
\[
\mathrm{lppd} = \sum_{i=1}^n \log\Big(\frac{1}{S}\sum_{s=1}^S e^{\ell_i(\theta^{(s)})}\Big), \qquad
p_{\text{waic}} = \sum_{i=1}^n \mathrm{Var}_s\big[\ell_i(\theta^{(s)})\big],
\]
and
\[
\mathrm{elpd}_{\text{waic}} = \mathrm{lppd} - p_{\text{waic}}.
\]

The additive constant $\log|\det T|$ does not depend on $\theta$; it cancels in the variance term and adds the same constant to lppd across models sharing $T$, so differences in $\mathrm{elpd}$ are preserved. For more information on WAIC computation, see ~\cite{gelman2014}.
\end{remark}

\subsubsection{Prior and Posterior Predictive Checks}
To ensure that our models are reasonable and well-specified, we performed both prior and posterior predictive checks. Prior predictive checks involve generating hypothetical data from the model using only the prior distributions on the parameters, before observing the actual dataset. This allows us to verify that the priors and the model structure are sensible and capable of producing plausible data. Posterior predictive checks are performed after fitting the model to the observed data: we generate replicated datasets using parameter values drawn from the posterior distribution. By comparing these replicated datasets to the observed data, we can assess whether the model is capable of capturing the key features of the data. Together, these checks provide a practical way to diagnose potential misspecifications and ensure that the models are suitable for subsequent comparisons using WAIC or Bayes factors.

\begin{lemma}[Prior and posterior predictive equivalence between scales]
Let $\theta \sim f(\theta)$ and $y \sim \mathcal{N}(0, I_n)$ be independent. Define
\[
\mu := \mu_{\rm model}(\theta) + L y.
\]
Then:
\begin{enumerate}
\item (\emph{Prior predictive}) Conditionally on $\theta$, $\mu \sim \mathcal{N}(\mu_{\rm model}(\theta), \Sigma)$. Marginally over $\theta$, this recovers the usual prior predictive distribution.
\item (\emph{Posterior predictive}) If $\theta \sim f(\theta \mid \mu_{\rm obs})$ and $y' \sim \mathcal{N}(0, I_n)$ independent, then
\[
\mu' := \mu_{\rm model}(\theta) + L y'
\]
has the usual posterior predictive distribution. Consequently, any test statistic computed on simulated $\mu$ is identical whether generated in whitened space and mapped back via $L$ or generated directly in the original scale.
\end{enumerate}
\end{lemma}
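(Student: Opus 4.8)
The plan is to reduce both statements to the elementary fact that an affine image of a standard Gaussian is Gaussian, combined with the law of total probability (and, for the final sentence, the pushforward principle). First I would record the linear-algebra input: since $y\sim\mathcal{N}(0,I_n)$ and $L$ is the fixed, invertible Cholesky factor with $LL^\top=\Sigma$, the vector $Ly$ is Gaussian with mean $0$ and covariance $L\,I_n\,L^\top=\Sigma$. Conditioning on $\theta$ and adding the deterministic shift $\mu_{\rm model}(\theta)$ then gives $\mu\mid\theta\sim\mathcal{N}(\mu_{\rm model}(\theta),\Sigma)$, which is exactly the likelihood assumed throughout and already exploited in the whitening lemmas. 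This settles the conditional claim in part (1).

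For the marginal (prior predictive) claim I would integrate against the prior: because $\theta$ and $y$ are independent, the joint law of $(\theta,\mu)$ factorises as $f(\theta)\,f(\mu\mid\theta)$ with $f(\mu\mid\theta)=\Phi_n(\mu;\mu_{\rm model}(\theta),\Sigma)$ (the general $n$-variate Gaussian density) from the previous step, so $f(\mu)=\int_\Theta f(\theta)\,\Phi_n(\mu;\mu_{\rm model}(\theta),\Sigma)\,d\theta$, which is precisely the definition of the prior predictive. For part (2) the argument is verbatim the same with $f(\theta)$ replaced by the posterior $f(\theta\mid\mu_{\rm obs})$ and $y$ replaced by the fresh draw $y'$; the point to flag explicitly is that $y'$ is drawn independently of $\theta$ (hence of $\mu_{\rm obs}$), so conditioning on the data does not disturb the noise term and $\mu'\mid\theta\sim\mathcal{N}(\mu_{\rm model}(\theta),\Sigma)$ still holds, yielding $f(\mu'\mid\mu_{\rm obs})=\int_\Theta f(\theta\mid\mu_{\rm obs})\,\Phi_n(\mu';\mu_{\rm model}(\theta),\Sigma)\,d\theta$.

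Finally, for the ``consequently'' sentence I would invoke the pushforward principle: two random vectors with the same law have the same law under any measurable map. Having shown that $\mu$ generated as $\mu_{\rm model}(\theta)+Ly$ has exactly the prior (resp.\ posterior) predictive distribution, i.e.\ the same distribution obtained by drawing $\mu$ directly from $\mathcal{N}(\mu_{\rm model}(\theta),\Sigma)$, any test statistic $T(\mu)$ inherits the same distribution under both constructions, so predictive checks are insensitive to whether simulation is performed in whitened coordinates and mapped back by $L$ or carried out directly on the original scale.

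There is essentially no analytic difficulty here; the statement is a bookkeeping exercise, and the ``hard part'' is purely one of care rather than technique. The one place to be precise is the independence structure in part (2): one must make explicit that the fresh noise $y'$ is independent of both $\theta$ and $\mu_{\rm obs}$, so that the posterior predictive is genuinely $\mathbb{E}_{\theta\mid\mu_{\rm obs}}\!\big[\mathcal{N}(\mu_{\rm model}(\theta),\Sigma)\big]$ and does not inadvertently re-use the whitened residuals $y(\theta)$ of the \emph{observed} data. A secondary remark worth one sentence is that $L$ being a fixed bijection independent of $\theta$ (and of the model index $M$) — an assumption already in force in the earlier lemmas — is exactly what makes the change of variables clean and carries over here unchanged.
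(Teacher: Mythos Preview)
Your proposal is correct and follows essentially the same route as the paper: both reduce to the fact that $Ly\sim\mathcal{N}(0,LL^\top)=\mathcal{N}(0,\Sigma)$, add the deterministic shift conditionally on $\theta$, and then marginalise over the prior or posterior. Your write-up is in fact more careful than the paper's --- you spell out the marginalisation integrals, flag the independence of the fresh noise $y'$ from $\mu_{\rm obs}$, and justify the ``consequently'' via the pushforward principle, whereas the paper handles the last point with a one-line appeal to the posterior invariance lemma.
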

\begin{proof}
Since $y \sim \mathcal{N}(0, I_n)$, a linear transformation gives
\[
Ly \sim \mathcal{N}(0, LL^\top) = \mathcal{N}(0, \Sigma),
\]
so conditionally on $\theta$,
\[
\mu = \mu_{\rm model}(\theta) + Ly \sim \mathcal{N}(\mu_{\rm model}(\theta), \Sigma),
\]
proving the prior predictive result.

For the posterior predictive, the same argument applies: conditionally on $\theta$,  $\mu' = \mu_{\rm model}(\theta) + Ly' \sim \mathcal{N}(\mu_{\rm model}(\theta), \Sigma)$, and marginalising over the posterior $f(\theta \mid \mu_{\rm obs})$ recovers the usual posterior predictive distribution. By the posterior invariance lemma, simulating in the whitened space and mapping back yields identical distributions for any test statistic.
\end{proof}

\begin{figure}[H]
    \centering

        \includegraphics[width=\linewidth]{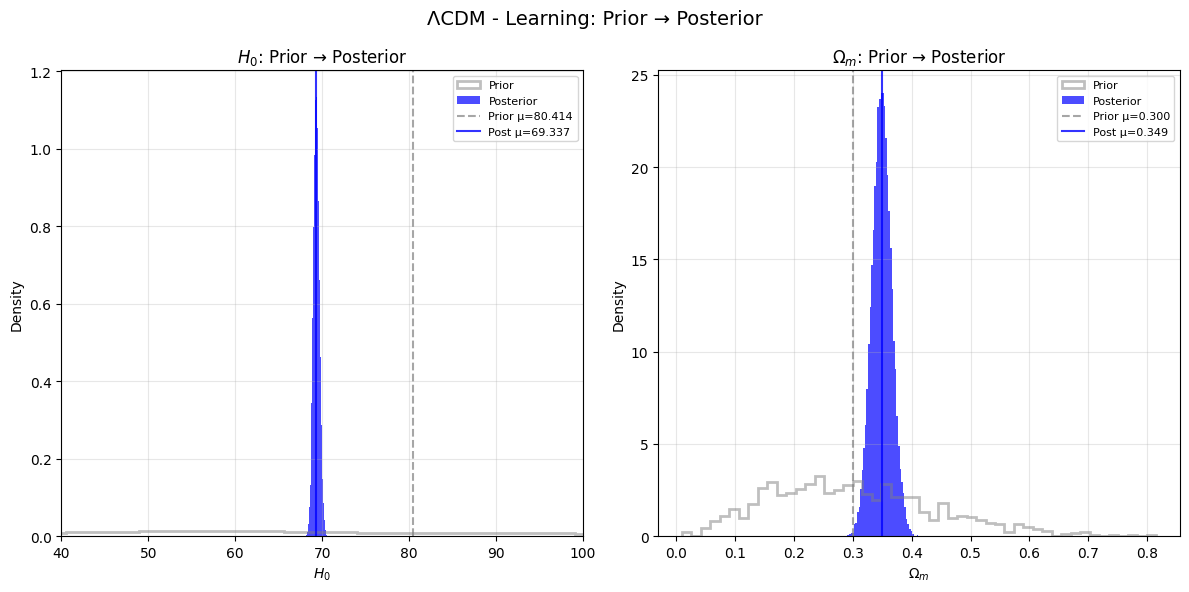}
        
    \caption{Comparison of prior (gray hollow curves) and posterior (blue solid curves) distributions for the parameters \(H_0\) and \(\Omega_m\) under the \(\Lambda\)CDM model. Priors are weakly informative over physically reasonable ranges. The posterior distributions show significant shrinkage and change in shape relative to the priors, indicating that the DES-SN5YR data provided strong constraints on both parameters.}
    \label{fig:lcdm_side_by_side}
\end{figure}

\begin{figure}[H]
    \centering

        \includegraphics[width=\linewidth]{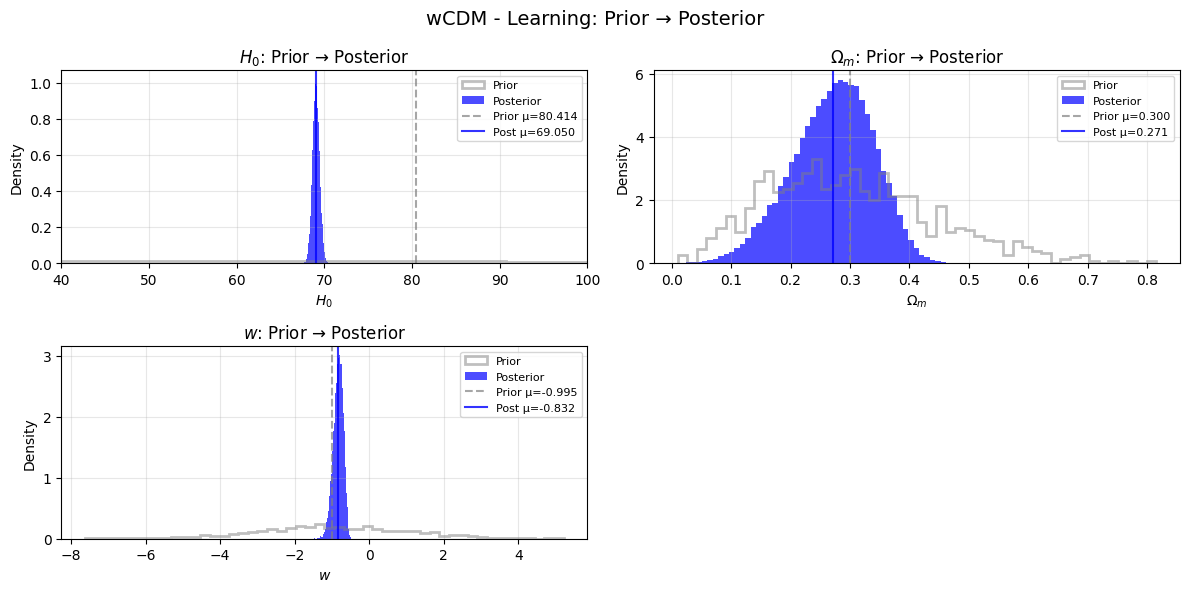}
        
    \caption{Same as Fig. 6, but for the \(w\)CDM model. The posterior for \(H_0\) remains largely consistent with the \(\Lambda\)CDM case, while \(\Omega_m\) is more weakly constrained by the data. The additional free parameter \(w\) is moderately well constrained, with its posterior indicating a preference slightly above the cosmological constant value \(w = -1\). The broader posterior for \(\Omega_m\) reflects the mild degeneracy between matter density and the dark-energy equation-of-state parameter.
}
    \label{fig:wcdm_side_by_side}
\end{figure}

\begin{figure}[H]
    \centering

        \includegraphics[width=\linewidth]{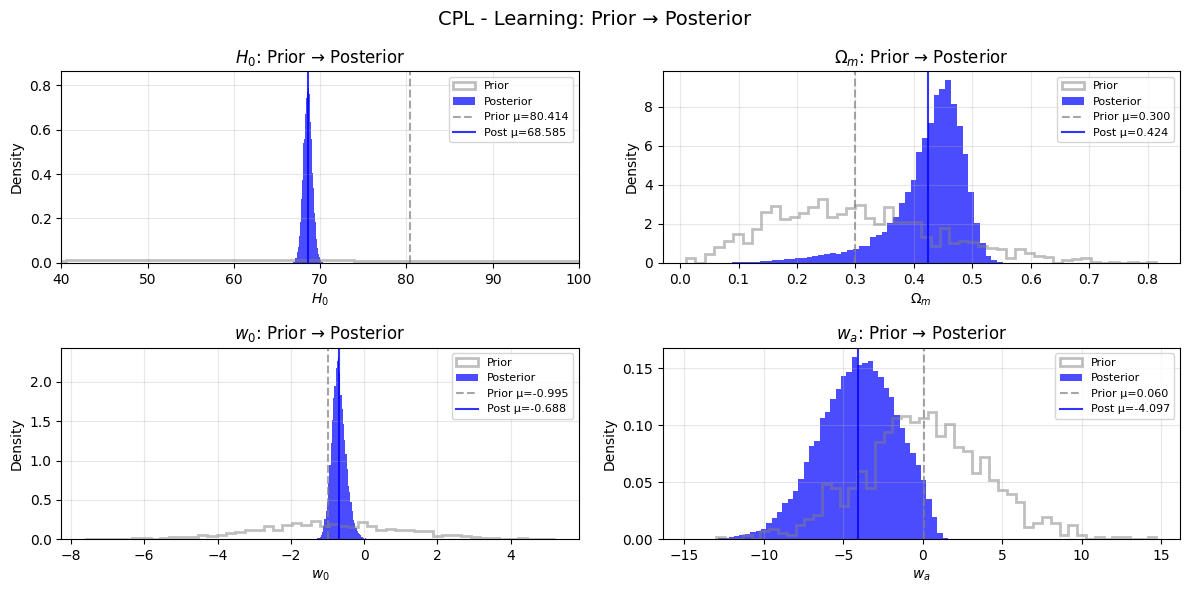}
        
    \caption{Same as Fig. 7, but for the \(CPL\) model. The posterior for \(H_0\) remains largely consistent with the \(\Lambda\)CDM case, while \(\Omega_m\) is more weakly constrained by the data. The additional parameters \(w_0\) and \(w_a\) show differing levels of constraint. \(w_0\) is relatively well constrained, with its posterior indicating a mild deviation from the cosmological constant value \(w_0 = -1\), while \(w_a\) remains weakly constrained. The broader posteriors for \(\Omega_m\) and \(w_a\) reflect the degeneracy between matter density and the evolving dark-energy equation-of-state parameters.}
    \label{fig:cpl_side_by_side}
\end{figure}

To quantify the degree to which our inferences are driven by the data rather than by prior specification, we computed the marginal Kullback--Leibler (KL) divergence \cite{KullbackLeibler1951} between the posterior and prior distributions for each parameter. For a parameter $\theta$, this is defined as
\begin{equation}
D_{\mathrm{KL}}(\pi(\theta) \,|\, p(\theta)) \;=\; \int \pi(\theta) \, \log \frac{\pi(\theta)}{p(\theta)} \, d\theta,
\end{equation}
where $\pi(\theta)$ is the posterior density and $p(\theta)$ the prior density \cite{Shlens2014, JewsonSmithHolmes2018}. 

We approximate the posterior density $\pi(\theta)$ using a Gaussian kernel density estimate (KDE) using the NumPyro library in Python. The prior density $p(\theta)$ is available analytically from the NumPyro distribution objects used in the models (LogNormal for $H_0$, Beta for $\Omega_m$). The densities are normalized and evaluated on a fine grid of 2000 points spanning the support of the posterior (approximately $\pm 6\sigma$ around the mean), and the KL divergence is then computed numerically by trapezoidal quadrature. 

Alongside KL divergence, we also report a simple shrinkage measure of posterior relative to prior standard deviation:
\begin{equation}
\text{shrinkage} \;=\; 1 - \frac{\sigma_{\text{post}}}{\sigma_{\text{prior}}},
\end{equation}
which quantifies the extent to which the posterior distribution is narrower than the prior. KL divergence captures how much the shape of the posterior departs from the prior, while shrinkage quantifies how much the scale of uncertainty has reduced. We measure KL values using the natural logarithm (nats). Larger nats  and high shrinkage indicate that the data are strongly informative relative to the prior, while very low values would suggest that posterior inference is substantially prior-driven. We note that these KL values are marginal and do not capture dependencies between parameters. We also note that the measurements are numerical approximations. Below is a table with the results from the inference in this paper.

\begin{table}[H]
\centering
\footnotesize
\begin{tabular}{llrrrrrr}
\toprule
Model & Parameter & Prior $\mu$ & Prior $\sigma$ & Posterior $\mu$ & Posterior $\sigma$ & Shrinkage & KL (nats) \\
\midrule
$\Lambda$CDM& $H_{0}$     & 79.32 & 42.27 & 69.34 & 0.35 & 99.2\% & 1.42 \\
& $\Omega_{m}$ & 0.300 & 0.138 & 0.349 & 0.017 & 88.0\% & 1.35 \\
\midrule
$w$CDM& $H_{0}$     & 79.32 & 42.27 & 69.05 & 0.40 & 99.0\% & 1.41 \\
& $\Omega_{m}$ & 0.300 & 0.138 & 0.271 & 0.068 & 50.9\% & 0.36\\
& $w$         & -1.00 & 2.00  & -0.83 & 0.13 & 93.4\% & 1.38 \\
\midrule
CPL& $H_{0}$     & 79.32 & 42.27 & 68.59 & 0.50 & 98.8\%& 1.42 \\
& $\Omega_{m}$ & 0.300 & 0.138 & 0.424 & 0.064 & 53.8\% & 1.06\\
& $w_{0}$     & -1.00 & 2.00  & -0.69 & 0.19 & 90.5\% & 1.33 \\
& $w_{a}$     & 0.00  & 4.00  & -4.10 & 2.44 & 39.0\% & 0.72 \\
\bottomrule
\end{tabular}
\caption{Prior--posterior comparison diagnostics for $\Lambda$CDM, $w$CDM, and CPL models. 
Reported are prior and posterior means ($\mu$), standard deviations ($\sigma$), percentage shrinkage in variance, and approximate KL divergences (posterior $|$ prior) in nats.}
\label{tab:prior-posterior-diagnostics}
\end{table}

\section{Results and Analysis}

In this section we present the main findings of our Bayesian inference of cosmological parameters from the DES-SN5YR Type Ia supernova dataset which was expanded on in  \hyperref[sec:data]{Section 3}. The results of performing the Hamiltonian Monte Carlo (HMC) as described in \hyperref[sec:HMC]{Section 4.2} with the relevant whitening described in \hyperref[sec:Cholesky]{Section 4.5} on the cosmological models discussed in \hyperref[sec:cosmo_models]{Section 2.2}: \(\Lambda\)CDM, wCDM, and CPL. Each model's parameters were sampled with four independent chains each consisting of 10,000 steps. 

We begin the analysis by examining the convergence of the HMC chains themselves.  Examining the trace plots in \hyperref[fig:lcdm_trace]{Figure 3}, \hyperref[fig:wcdm_trace]{Figure 4} , and \hyperref[fig:CPL_trace]{Figure 5} indicates good mixing with no visible trends, suggesting that the chains have converged successfully.

We now look at the $\hat{r}$ values in \hyperref[tab:lcdm-des-summary]{Table 2}, \hyperref[tab:wcdm-des-summary]{Table 3}, \hyperref[tab:cpl-des-summary]{Table 4}. The $\hat{r}$ values are essentially unity for every parameter, with maximum deviations of less than $0.001$. This indicates excellent mixing of the chains and the absence of non-converged modes. Similarly, the effective sample sizes (ESS) are comfortably above $6,000$ in all cases, with most parameters exceeding $8,000$–$12,000$. These numbers far surpass the common rule-of-thumb threshold of $1,000$, suggesting that Monte Carlo error is negligible relative to posterior uncertainty.

The Monte Carlo standard errors (MCSEs) for both the mean and standard deviation are correspondingly small: at most a few $10^{-3}$ for the location parameters, and well below $0.05$ even for the weakly constrained CPL parameter $w_a$. This shows that posterior summaries such as means and HDIs are based on stable estimates, not noisy chain averages. Taken together, these diagnostics demonstrate that the chains have converged robustly and that posterior quantities reported in the parameter tables represent well-sampled distributions. 

Beyond standard convergence diagnostics, we also examine the influence of the prior assumptions on the inferred posteriors by comparing prior and posterior distributions directly. \hyperref[tab:prior-posterior-diagnostics]{Table 1} reports prior and posterior means, standard deviations, percentage shrinkage, and approximate Kullback--Leibler (KL) divergences $D_{\mathrm{KL}}(\pi(\theta)\,|\,p(\theta))$, all of which was described in \hyperref[sec:Model_Select]{Section 4.6.3}. In all three cosmological models, the posterior uncertainties are  reduced relative to the priors, with shrinkage values exceeding $90\%$ for most parameters. This indicates that the constraints are driven almost entirely by the data rather than by prior informativeness. The KL divergences, which quantify the information gain from prior to posterior, are mostly of order unity in natural units ( \(\sim 1 - 1.5 \text{ nats} \)), further supporting the conclusion that the data provide substantial updating of the prior beliefs. Parameters that are less tightly constrained, such as $w_a$ in CPL and $\Omega_m$ in $w$CDM, show reduced shrinkage and smaller KL divergences, reflecting the limited constraining power of the supernova dataset in those directions. 

\begin{table}[H]
    \centering
    \resizebox{\textwidth}{!}{
    \begin{tabular}{llllllllcc}\toprule
   & mean& standard deviation&HDI \(3\%\)& HDI \(97\%\)& MCSE mean&MCSE standard deviation&  ESS bulk&ESS tail&\(\hat{r}\)\\\midrule
            \(H_0\)& 69.33687  &  0.34543  & 68.70442    &  69.99445      &   0.00394     & 0.00291    & 7676.06076 &8804.30733   &  1.00045   \\
 \(\Omega_m\)&  0.34919   & 0.01659    & 0.31738     & 0.37999       & 0.00019  & 0.00014  & 7554.19472    & 7827.01680   & 1.00044    \\ \bottomrule\end{tabular}
 }
    \caption{Summary for the \(\Lambda\)CDM inference}
    \label{tab:lcdm-des-summary}
\end{table}
\begin{table}[H]
    \centering
    \resizebox{\textwidth}{!}{
    \begin{tabular}{llllllllcc}
    \toprule
    & mean & standard deviation & HDI \(3\%\) & HDI \(97\%\) & MCSE mean & MCSE standard deviation & ESS bulk & ESS tail & \(\hat{r}\) \\
    \midrule
    \(H_0\)      & 69.05035    & 0.40305     & 68.29492    & 69.81798     & 0.00387    &0.00268   &   10872.12957   & 12434.70591   & 1.00031  \\
    \(\Omega_m\) & 0.27099     & 0.06788      & 0.14265      & 0.39405      & 0.00075    & 0.00051    & 8410.76372   & 9001.69727    & 1.00084   \\
    \(w\)        & -0.83235    & 0.13178     & -1.07945     & -0.59912     & 0.00146    & 0.00102    & 7934.64946   & 9404.79091   & 1.00090  \\
    \bottomrule
    \end{tabular}
    }
    \caption{Summary for the wCDM inference}
    \label{tab:wcdm-des-summary}
\end{table}
\begin{table}[H]
    \centering
    \resizebox{\textwidth}{!}{
    \begin{tabular}{llllllllcc}
    \toprule
    & mean & standard deviation & HDI \(3\%\) & HDI \(97\%\) & MCSE mean & MCSE standard deviation & ESS bulk & ESS tail & \(\hat{r}\) \\
    \midrule
    \(H_0\)      & 68.58548    & 0.50046     & 67.64297     & 69.51714     & 0.00539    & 0.00338    & 8611.34281   & 11858.42566   & 1.00010  \\
    \(\Omega_m\) & 0.42443     & 0.06383     & 0.30055      & 0.52239      & 0.00084    & 0.00104    & 7175.11622   & 6110.96641    & 1.00012  \\
    \(w_0\)      & -0.68847    & 0.19080     & -1.03612     & -0.31275     & 0.00224    & 0.00163    & 7474.27764   & 8798.41156   & 1.00038   \\
    \(w_a\)      & -4.09688    & 2.43860     & -8.37493     & 0.43779      & 0.03112    & 0.01900    & 6031.27212   & 8280.06544   & 1.00038 \\
    \bottomrule
    \end{tabular}
    }
    \caption{Summary for the CPL inference}
    \label{tab:cpl-des-summary}
\end{table}

Our inference strategy differs from that used in the DES-SN5YR analysis \cite{DES2024}. In our models, the Hubble constant  \(H_0\) was sampled explicitly with a prior. In contrast, the DES collaboration marginalised over the absolute magnitude–Hubble constant degeneracy, such that their supernova-only analysis does not yield an independent constraint on \(H_0\) \cite{DES2024}. This distinction means that our reported \(H_0\) posteriors should be interpreted as conditional on the chosen prior, rather than as stand-alone measurements. Despite this difference in nuisance treatment, the parameter estimates reported in \hyperref[tab:lcdm-des-summary]{Table 1}, \hyperref[tab:wcdm-des-summary]{Table 2}, and \hyperref[tab:lcdm-des-summary]{Table 3} are broadly consistent with those reported by DES and earlier supernova studies \cite{Brout2022, Planck2018}

\begin{figure}[H]
    \centering
    \includegraphics[width=0.65\linewidth]{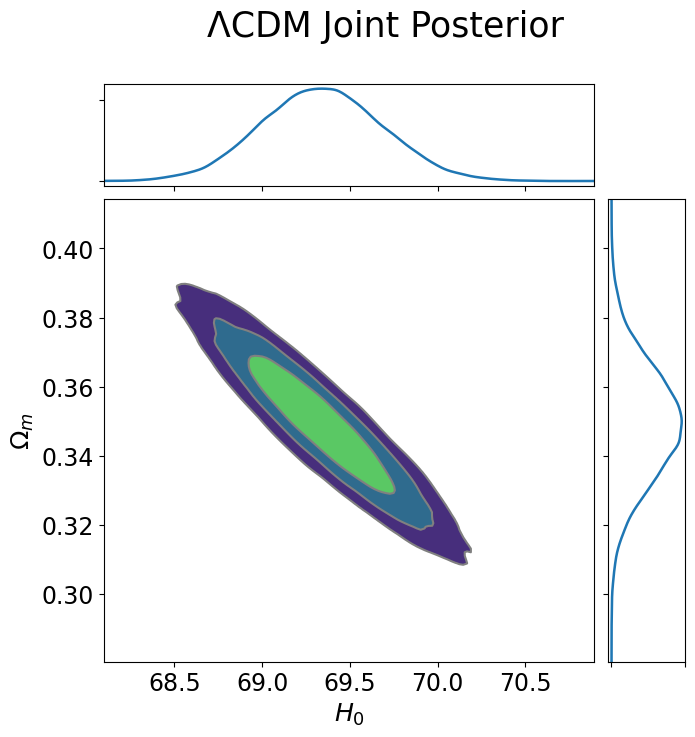}
    \caption{Posterior distributions for cosmological parameters under the \(\Lambda\)CDM model. The contours represent the three default regions obtained from Arviz after using the NUTS sampler in NumPyro, based on 9000 posterior samples from the DES-SN5YR dataset. The parameters include the Hubble constant \(H_0\) and matter density \(\Omega_m\). We see the degeneracy between \(H_0\) and \(\Omega_m\), as increases in \(H_0\) can be partially compensated by decreases in \(\Omega_m\), resulting in nearly identical luminosity distance–redshift relations.}
    \label{fig:lcdm_posterior}
\end{figure}

\begin{figure}[H]
    \centering
    \includegraphics[width=0.65\linewidth]{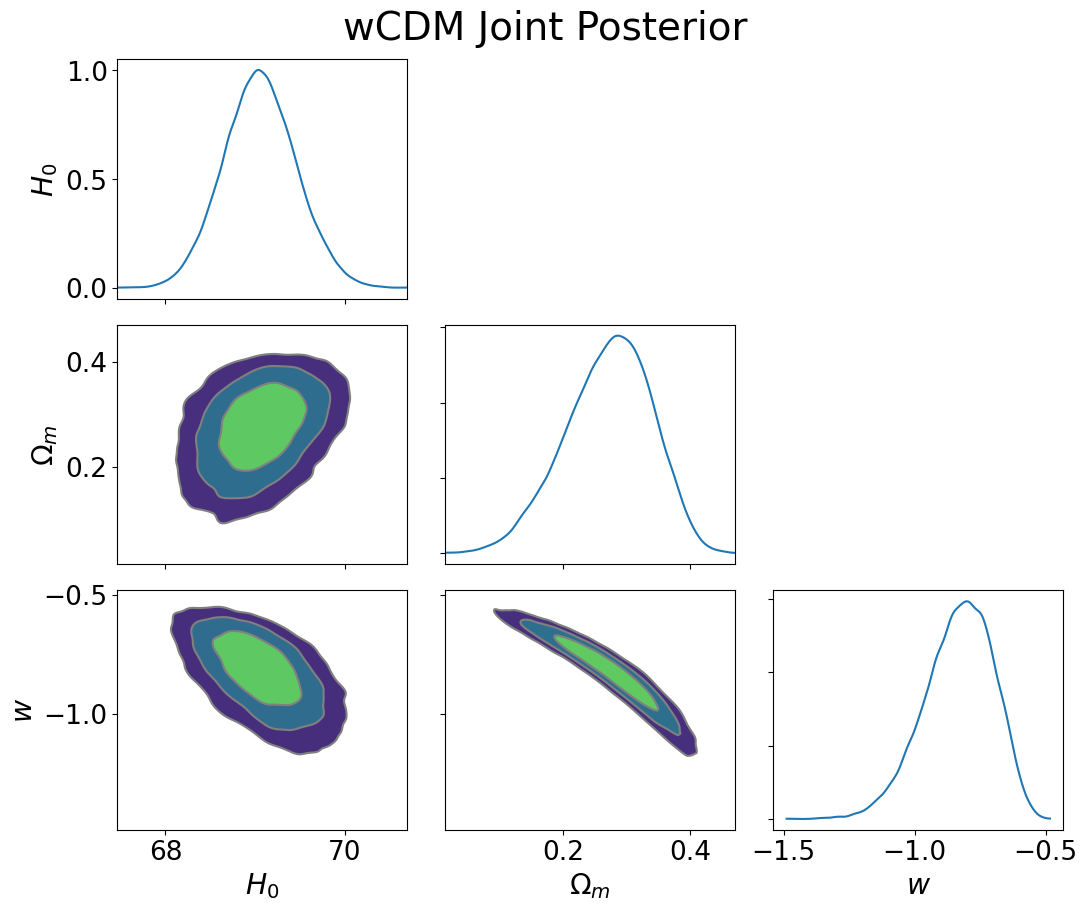}
    \caption{Posterior distributions for cosmological parameters under the \(w\)CDM model are structured similarly to Fig. 9. The parameters include the Hubble constant \(H_0\), matter density \(\Omega_m\), and dark-energy equation-of-state parameter \(w\). Compared to the \(\Lambda\)CDM case, the posterior exhibits a degeneracy between \(w\) and \(\Omega_m\), reflecting the additional flexibility introduced by the free dark-energy parameter.}
    \label{fig:wcdm_posterior}
\end{figure}

\begin{figure}[H]
    \centering
    \includegraphics[width=0.65\linewidth]{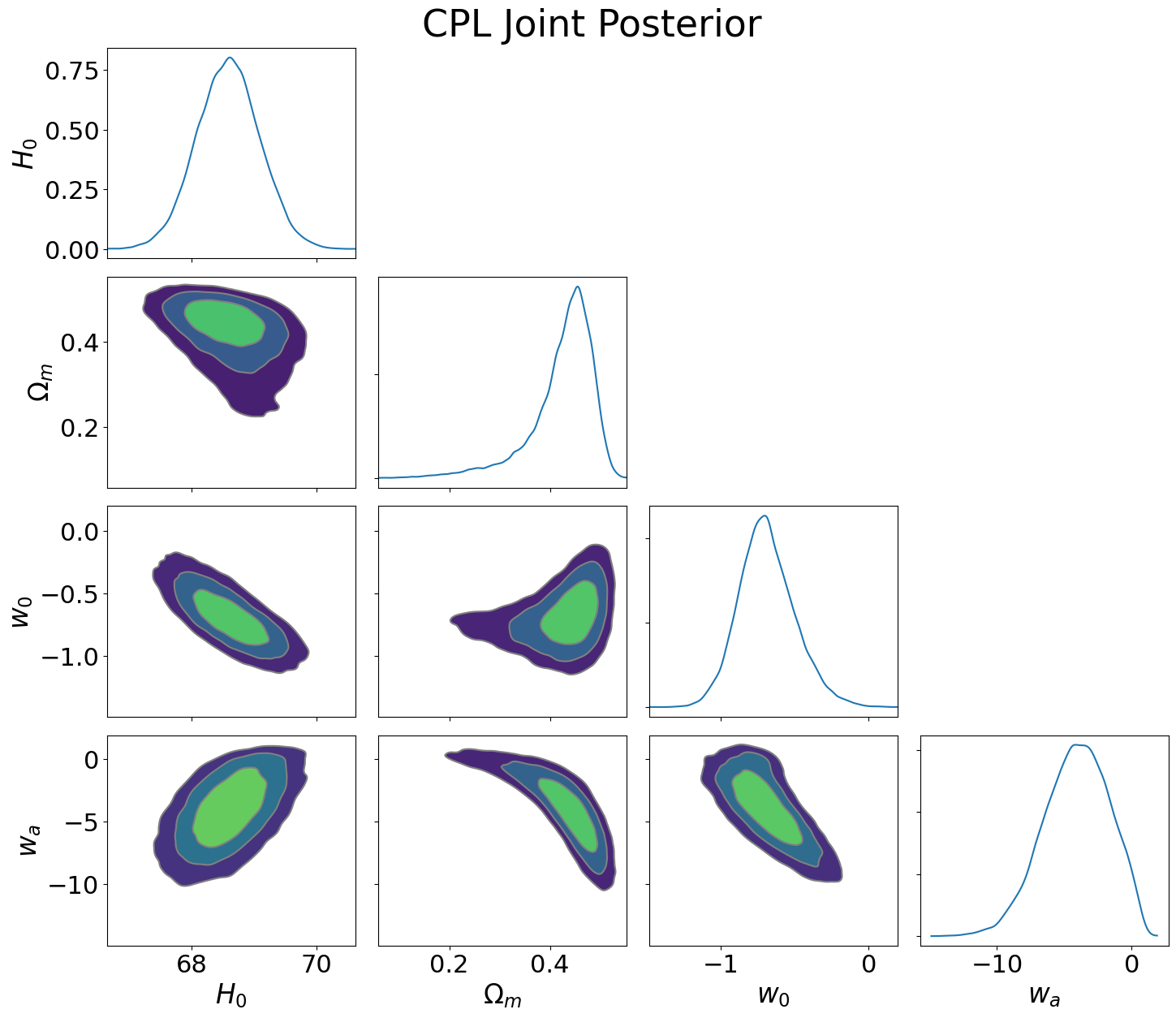}
    \caption{Posterior distributions for cosmological parameters under the \(CPL\) model structured similarly to Fig. 9. The parameters include \(H_0\), \(\Omega_m\), \(w_0\), and \(w_a\). The posteriors show correlated degeneracies among \((w_0, w_a)\), \((w_0, \Omega_m)\), and \((w_a, \Omega_m)\), consistent with the trade-offs between matter density and the evolving dark-energy equation-of-state in \(w(z) = w_0 + w_a \frac{z}{1+z}\).}
    \label{fig:cpl_posterior}
\end{figure}

The joint posterior distributions in \hyperref[fig:lcdm_posterior]{Figure 9}, \hyperref[fig:wcdm_posterior]{Figure 10} and \hyperref[fig:cpl_posterior]{Figure 11}, reveal stretched degeneracies, most prominently in the $(H_0, \Omega_{m})$ for \(\Lambda\)CDM, $(w, \Omega_{m})$ for \(w\)CDM and $(w_{0}, w_{a}), (w_{0}, \Omega_m), (w_{a}, \Omega_m)$ for CPL.  These curved contours arise because Type Ia supernovae constrain the luminosity distance--redshift relation, which depends on integrals of the Hubble parameter. Different parameter combinations can therefore yield nearly indistinguishable expansion histories, resulting in extended,  non-Gaussian degeneracy directions. For example, an increase in $H_{0}$ can be compensated by a reduction in $\Omega_{m}$, while in the CPL parameterisation $w(z) = w_{0} + w_{a} \frac{z}{1+z}$, the parameters $w_{0}$ and $w_{a}$ trade off to maintain similar effective dark energy evolution. Such elongated constraints have been documented in previous supernova analyses \cite{Brout2022, DES2024}.

\begin{table}[H]
    \centering
    \footnotesize
    \begin{tabular}{lrrrrrrr}
    \toprule
    Model & Rank & elpd & \(p_{\mathrm{eff}}\) & elpd difference & Weight & SE & dSE \\
    CPL - DES    & 0 & -2501.24  & 3.39  & 0.00   & 1.00          & 28.28  & 0.0 \\
    \(\Lambda\)CDM - DES   & 1 & -2502.88  & 1.97  & 1.64    & 0  & 28.49  & 1.70 \\
    wCDM - DES & 2 & -2502.90 & 2.59 & 1.66     & 0         & 28.47  & 1.30 \\
    \bottomrule
    \end{tabular}
    \caption{WAIC model comparison summaries on DES data. Columns: rank, expected log predictive density (elpd), effective number of parameters \(p_{\mathrm{eff}}\), elpd difference relative to best model, model weight, standard error (SE), and difference standard error (dSE).}
    \label{tab:waic-comparison-des}
\end{table}

The WAIC results described in \hyperref[sec:Model_Select]{Section 4.6}, summarised in \hyperref[tab:waic-comparison-des]{Table 5}, show that all three models achieve nearly identical predictive performance on the DES supernova dataset. The best-ranked model is CPL with an elpd of $-2501.24$, followed closely by $\Lambda$CDM and $w$CDM with elpd values of \(-2502.88\) and \(-2502.90\), respectively.  The elpd differences between the models are significantly less than the \(SE\) of each elpd, which implies that no meaningful distinction in terms of predictive performance can be made for the models \cite{vehtari2017}.

The model weights, computed from the WAIC elpd values, assign weight \(1.00\) to CPL and zero to both \(\Lambda\)CDM and \(w\)CDM. These weights represent normalised relative predictive support rather than posterior model probabilities, and their extremal values here reflect the exponential scaling of the weighting scheme. Given that the elpd differences are \(<2\) and much smaller than their uncertainties, this outcome should not be interpreted as strong evidence in favour of CPL. Rather, it reinforces the conclusion that all three models are statistically indistinguishable in terms of predictive performance.

In practical terms, this indicates that the DES supernova dataset alone does not provide enough information to distinguish between a cosmological constant, a constant-\(w\) dark energy model, or a time-varying equation of state. WAIC therefore suggests that all three frameworks are equally capable of explaining the observed luminosity–distance relation of Type Ia supernovae. This can be seen visually in \hyperref[fig:hubble-diagram2]{Figure~12} and \hyperref[fig:model-diagnostics]{Figure~13}, where the fitted Hubble diagrams for all three models overlap almost perfectly, even when accounting for their \(68\%\) credible intervals. This illustrates that, consistent with the WAIC comparison, the predictive performance of \(\Lambda\)CDM, \(w\)CDM, and CPL is statistically indistinguishable given the DES-SN5YR data. Furthermore, the residual plots exhibit nearly identical magnitudes and distributions across models, providing additional support for this conclusion of indistinguishability.

\begin{figure}[H]
    \centering

        \includegraphics[width=\linewidth]{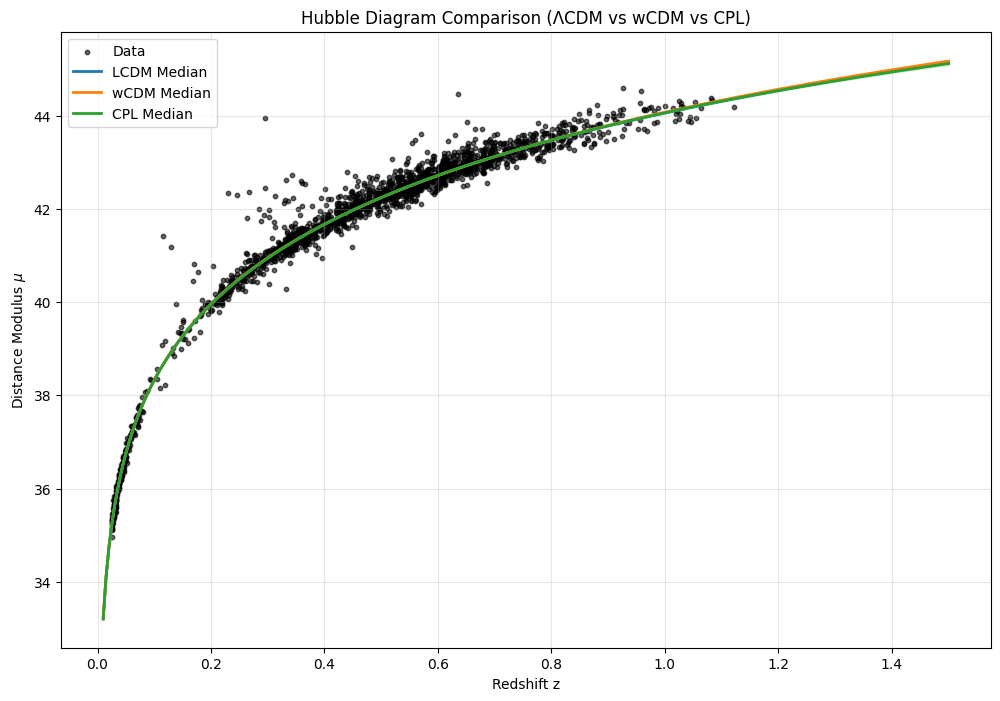}
        
    \caption{Hubble diagram showing the observed Type Ia supernovae from the DES-SN5YR dataset (black points) alongside the theoretical distance–modulus–redshift relations for the \(\Lambda\)CDM, \(w\)CDM, and CPL cosmological models. The solid lines represent the mean theoretical predictions based on the posterior means of each model, while the shaded bands correspond to the \(68\%\) and \(95\%\) credible intervals derived from the posterior predictive distributions. The overlap among the three theoretical curves indicates that, within the current data uncertainties, all models provide comparably good fits to the observed luminosity–distance relation. The credible intervals widen at higher redshifts (though very slightly), reflecting the reduced constraining power of supernova data in that range.}
    \label{fig:hubble-diagram2}
\end{figure}
\begin{figure}[H]
    \centering

        \includegraphics[width=\linewidth]{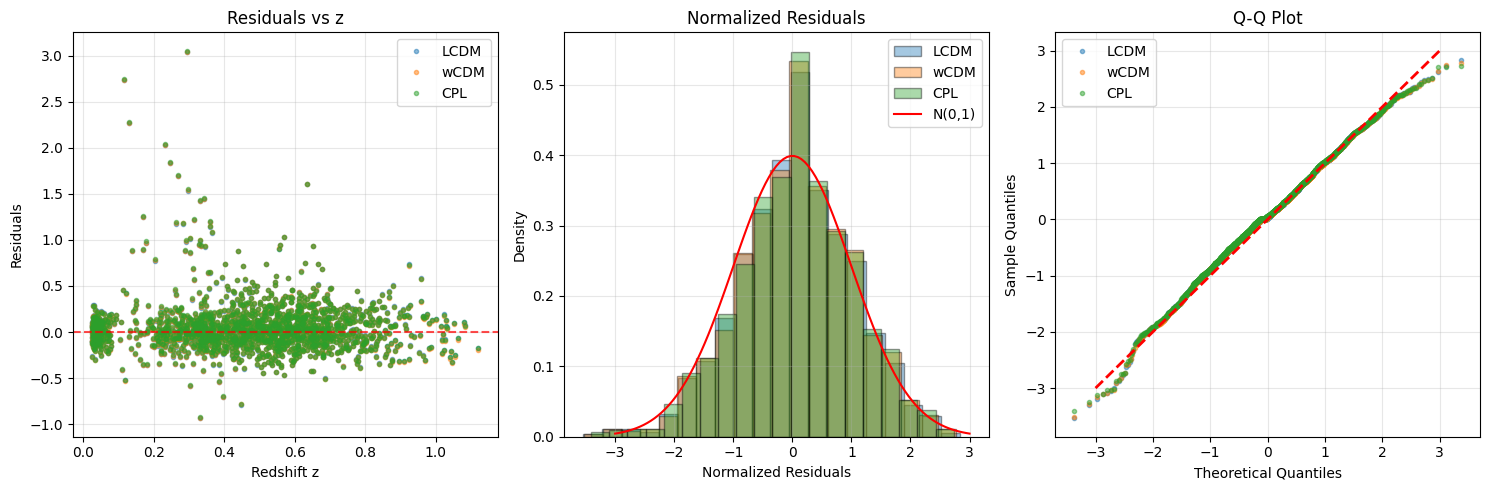}
        
    \caption{Diagnostic plots comparing the residuals for the \(\Lambda\)CDM, \(w\)CDM, and CPL models. The left panel shows the residuals (observed minus model-predicted distance modulus) as a function of redshift, with no systematic trends visible across models. The middle panel displays the distribution of residuals for each model, all approximately centered around zero and following a Gaussian distribution. The right panel shows Q–Q plots comparing the residuals to a standard normal distribution, confirming approximate Gaussianity across models. The diagnostics show no strong evidence of systematic bias.}
    \label{fig:model-diagnostics}
\end{figure}

The Bayes factor results, computed via bridge sampling with 1000 iterations and an estimated \(2\%\) error on the evidence, are summarised in \hyperref[tab:bridge-bayes-summary]{Table~\ref*{tab:bridge-bayes-summary}}. In contrast to the WAIC analysis, which indicated that the three models are statistically indistinguishable in predictive performance, the Bayes factor comparison provides strong and in some cases decisive evidence favouring particular models. 

The comparison between \(\Lambda\)CDM and \(w\)CDM yields a Bayes factor of \(9.70 \times 10^{-29}\) (log Bayes factor \(\approx -64.5\)), representing decisive evidence in favour of \(w\)CDM over \(\Lambda\)CDM. Comparing \(\Lambda\)CDM with CPL gives a Bayes factor of \(4.57 \times 10^{21}\) (log Bayes factor \(\approx 49.9\)), decisively favouring \(\Lambda\)CDM over CPL. Finally, the comparison of \(w\)CDM with CPL produces a Bayes factor of \(4.71 \times 10^{49}\) (log Bayes factor \(\approx 114.4\)), decisively supporting \(w\)CDM over CPL. These values are far beyond the conventional Jeffreys thresholds for ``decisive'' evidence \cite{Robert2007}. 
 
\begin{table}[H]
    \centering
    \resizebox{\textwidth}{!}{
    \begin{tabular}{lp{3.2cm}p{3.2cm}rrr}
    \toprule
    Model 1 & Model 2 & Log Marginal Likelihood \(\log P(D|M_1)\)& Log Marginal Likelihood \(\log P(D|M_2)\) & Bayes Factor \(BF_{1,2}\) & \(\log BF_{1,2}\) \\
    \midrule
    LCDM    & wCDM    & -53557.62 & -53493.11 & \(9.70 \times 10^{-29}\) & -64.51 \\
    LCDM    & CPL     & -53557.62 & -53607.49 & \(4.57 \times 10^{21}\)  & 49.87 \\
    wCDM    & CPL     & -53493.11 & -53607.49 & \(4.71 \times 10^{49}\)  & 114.38 \\
    \bottomrule
    \end{tabular}
    }
    \caption{Bridge sampling log marginal likelihoods, Bayes factors, and log Bayes factors for pairwise cosmological model comparisons. Bayes factors are computed as \(BF_{1,2} = \exp(\log P(D|M_1) - \log P(D|M_2))\).}
    \label{tab:bridge-bayes-summary}
\end{table}

\section{Conclusions and Discussion}

\subsection{Summary of results}

In this paper, we carried out a Bayesian model comparison of three cosmological models: \(\Lambda\)CDM, \(w\)CDM, and CPL using Type Ia supernovae from the DES-SN5YR dataset. We employed two complementary approaches: the Widely Applicable Information Criterion (WAIC) to evaluate predictive accuracy, and Bayes factors via bridge sampling to compare marginal likelihoods.

The WAIC analysis indicated that the differences in expected log predictive density (elpd) between the models are minimal, with all pairwise elpd differences lying below two. According to the thresholds described in \cite{vehtari2017}, such small differences compared to the overall standard error of the elpd imply that the models are statistically indistinguishable in predictive accuracy. None of the models emerge as decisively superior on DES data, suggesting that the information criterion struggled to strongly differentiate the three cosmological models under a dataset that is not highly constraining.

In contrast, Bayes factor analysis provided decisive evidence for model preferences. The results from bridge sampling showed that \(w\)CDM is decisively preferred over \(\Lambda\)CDM (log Bayes factor \(\approx -64.5\)) and even more strongly over CPL (log Bayes factor \(\approx 114\)). Furthermore, \(\Lambda\)CDM was strongly favoured over CPL (log Bayes factor \(\approx 50\)). Taken together, these results establish a clear model hierarchy:
\[
w\mathrm{CDM} \succ \Lambda\mathrm{CDM} \succ \mathrm{CPL},
\]
with overwhelming Bayesian evidence against CPL.

\subsection{Interpretation and implications}

WAIC measures out-of-sample predictive accuracy, while Bayes factors provide a Bayesian measure of overall model plausibility, balancing fit and complexity. The agreement of WAIC across models suggests that, for short-term predictive purposes, \(\Lambda\)CDM, \(w\)CDM, and CPL perform equivalently on the DES-SN5YR dataset. By contrast, Bayes factors show that, when judged by fully Bayesian evidence, \(w\)CDM is strongly preferred.

The \(\Lambda\)CDM model remains the concordant framework, supported by CMB anisotropies \cite{brout2022pantheonplus}, BAO \cite{scolnic2018pantheon}, and large-scale structure surveys. However, our results indicate that, when evaluated on DES supernova data in isolation, extensions allowing a constant but free dark energy equation-of-state parameter \(w\) are favoured. This aligns with other supernova analyses (e.g.~\cite{scolnic2018pantheon, DES2024} ), which report mild but persistent tensions with \(\Lambda\)CDM under the \(w=-1\) assumption.

The CPL parametrisation, designed to capture redshift evolution in \(w\), is decisively disfavoured in our Bayes factor results. This suggests that within the constraining power of DES-SN5YR, the data provide no support for additional model flexibility beyond a constant \(w\). 

\subsection{Limitations}

Several caveats qualify these results. First, our \(H_0\) posterior is conditional on the prior and the choice to sample \(H_0\) explicitly rather than marginalising over the absolute magnitude nuisance parameter. As established in the literature, supernovae alone cannot determine the absolute distance scale without external anchors such as Cepheids, TRGB, or masers \cite{DES2024}. Our \(H_0\) values should therefore not be interpreted as stand-alone supernova measurements but rather as complementary, enabling direct comparison with DES’s marginalised treatment. This caveat does not affect our constraints on \(\Omega_m\), \(w\), \(w_0\), and \(w_a\), which remain consistent with DES and the broader literature.

A further limitation is the redshift range accessible to Type Ia supernovae. The DES-SN5YR data extend only to \(z \sim 1.4\), whereas probes such as BAO and the CMB extend to far higher redshifts (\(z \sim 100-1000\)). As a result, our analysis is confined to the relatively recent expansion history of the Universe and cannot constrain early-universe physics. This restriction also exacerbates degeneracies between cosmological parameters that could otherwise be broken by complementary datasets.

From a computational perspective, we were limited to four HMC chains of 10,000 steps each. Although all convergence diagnostics were satisfactory, including \(\hat{r} \approx 1\) and large effective sample sizes, a more thorough investigation might use longer chains or additional parallel chains (e.g.\ millions of steps) to be more sure of convergence.

Another important caveat concerns prior sensitivity. While our priors were broad and intended to be weakly informative, Bayes factors and, to a lesser extent, parameter posteriors can be sensitive to prior choices, especially in models with additional degrees of freedom \cite{Robert2007} such as CPL’s \(w_a\). A small examination of the priors was conducted through KL divergence but a systematic prior sensitivity study was not performed here, so the Bayes factor results should be interpreted conditionally on the adopted prior assumptions.

Closely related is the issue of evidence estimation. Bridge sampling is regarded as an accurate and efficient method for marginal likelihood evaluation as detailed in \cite{Gronau2017}, and we report estimated errors of \(\sim 2\%\). While sufficient for qualitative model ranking, such errors still imply some uncertainty in the quantitative Bayes factors which could perhaps be improved by alternative methods, such as nested sampling or thermodynamic integration.

Finally, we assumed a spatially flat universe throughout (\(\Omega_k = 0\)) along with other assumptions. While this is strongly supported by independent probes such as the CMB \cite{Planck2018}, it nevertheless represents an assumption that was hard-coded into the models. Allowing \(\Omega_k\) to vary could, in principle, shift posterior constraints or Bayes factors by altering the balance of model complexity versus fit.

\subsection{Future directions}

Next-generation supernova surveys with larger samples and greater redshift range than DES could offer more information to perform more justified inference and comparison between cosmological models. Combining these new supernova datasets with complementary probes such as the CMB, BAO, and weak lensing might then allow degeneracies to be broken and provide a more complete analysis. 

Beyond data, methodological changes might make a difference in the analysis. One such direction is the use of hierarchical Bayesian models as detailed in \cite{Robert2007}. This approach could yield more balanced and rigorous conclusions. 

\subsection{Conclusion}

In conclusion, our analysis finds that the \(\Lambda\)CDM, \(w\)CDM, and CPL models appear nearly indistinguishable in predictive performance on the DES supernova data when judged by WAIC. However, with Bayes factors decisively favouring \(w\)CDM over both \(\Lambda\)CDM and CPL, with \(\Lambda\)CDM itself strongly preferred to CPL. Taken together, the results, under the given data, suggest that allowing a constant but free dark energy equation-of-state parameter offers a better balance of fit and parsimony than either the fixed \(\Lambda\)CDM case or the more flexible CPL parametrisation. 

\section{Acknowledgments}
I would like to thank Dr Siri Chongchitnan and Dr Paul Chleboun for their guidance, patience, and insightful advice throughout this project. I would also like to thank Connor Cassidy for his help in running the Python programs and for his feedback on my code.

\newpage

\bibliographystyle{unsrt}     
\bibliography{references}

\newpage

\begin{appendices}
\section{CPL Hubble Parameter Derivation}
We start from the dark-energy factor in the general expression:
\[
\exp\!\left[ 3 \int_0^z \frac{1+w(z')}{1+z'} \, dz' \right]
= \exp\!\left[3\int_0^z \frac{1 + w_0 + w_a\frac{z'}{1+z'}}{1+z'} \, dz'\right].
\]
Evaluate the integral in two parts:
\[
3(1+w_0)\int_0^z \frac{dz'}{1+z'} \;+\; 3 w_a \int_0^z \frac{z'}{(1+z')^2}\,dz'.
\]
The first integral gives
\[
\int_0^z \frac{dz'}{1+z'} = \ln(1+z).
\]
For the second, let \(u=1+z'\) so \(z'=u-1\), \(du=dz'\):
\[
\int_0^z \frac{z'}{(1+z')^2}\,dz' = \int_{u=1}^{1+z} \frac{u-1}{u^2}\,du
= \int_1^{1+z} \left(\frac{1}{u}-\frac{1}{u^2}\right) du
= \big[\ln u + \tfrac{1}{u}\big]_{1}^{1+z}
= \ln(1+z) + \frac{1}{1+z} - 1.
\]
Combining them:
\[
3(1+w_0)\ln(1+z) \;+\; 3 w_a\Big(\ln(1+z) + \frac{1}{1+z} - 1\Big)
= 3(1+w_0+w_a)\ln(1+z) \;+\; 3 w_a\Big(\frac{1}{1+z}-1\Big).
\]
Noting \(\dfrac{1}{1+z}-1 = -\dfrac{z}{1+z}\), exponentiate:
\[
\exp\!\Big\{3(1+w_0+w_a)\ln(1+z) + 3 w_a\Big(\frac{1}{1+z}-1\Big)\Big\}
= (1+z)^{3(1+w_0+w_a)} \, \exp\!\Big(-\frac{3 w_a z}{1+z}\Big).
\]
Hence the energy component of the CPL Hubble parameter is
\[
E(z) \;=\; \sqrt{ \Omega_m (1+z)^3 + 
\Omega_{\mathrm{\Lambda}} (1+z)^{3(1+w_0+w_a)} \, \exp\!\Big(-\frac{3 w_a z}{1+z}\Big) }.
\]

\section{Code Availability}
The full code used for Bayesian modeling, posterior inference, and cosmological model comparison is available at the following GitHub repository:

\url{https://github.com/Niko-Gigi/Bayesian-Model-Evaluation-for-Cosmology}

\end{appendices}
\end{document}